 \def\ua{\uparrow}
 \def\wt{\widetilde}
\def\bR{\mathbb{R}}
\def\bT{\mathbb{T}}
\def\bR{\mathbb R}
\def\bN{\mathbb N}
\def\eps{\varepsilon}
\def\<{[}\def\>{]}
\newtheorem{theorem}{Theorem}
\newtheorem{proposition}[theorem]{Proposition}
\theoremstyle{definition}
\newtheorem{definition}[theorem]{Definition}
\newtheorem{remark}[theorem]{Remark}
\begin{document}
\title{\bf Model-free CPPI}
\author{ Alexander Schied\\
University of Mannheim\\
A5, 6\\
68131 Mannheim, Germany}
 
\date{\small May 25, 2013}

\maketitle

\begin{abstract}We consider Constant Proportion Portfolio Insurance (CPPI) and its dynamic extension, which may be called Dynamic Proportion Portfolio Insurance (DPPI). It is shown that these  investment strategies work within the    setting of F\"ollmer's pathwise It\^o calculus, which  makes no probabilistic assumptions whatsoever. This shows, on the one hand, that CPPI and DPPI are completely independent of any choice of a particular model  for the dynamics of asset prices. They  even make sense beyond the class of semimartingale sample paths and can be successfully defined for models admitting arbitrage, including some models based on fractional Brownian motion. On the other hand, the result can be seen as a case study for the general issue  of robustness in the face of  model uncertainty in finance.
\end{abstract}

\section{Introduction}

The purpose of this paper is twofold. On the one hand, it deals with Constant Proportion Portfolio Insurance (CPPI) and its dynamic extension, which may be called Dynamic Proportion Portfolio Insurance (DPPI). On the other hand, it deals with the general issues of model uncertainty and model risk in finance by presenting a case study in which a  problem of dynamic trading can be solved in a probability-free manner. 

Constant Proportion Portfolio Insurance (CPPI) was first studied by Perold \cite{Perold}, Black and Jones \cite{BlackJones}, and Black and Perold \cite{BlackPerold}. It provides a strategy that yields superlinear participation in future asset returns while retaining a  security guarantee on a part  of the invested capital (\lq\lq the floor"). 
 In the Black \& Scholes framework, which is the basis for most academic studies on CPPI, constructing a CPPI strategy is equivalent to hedging a certain power option. Moreover, in this framework, the CPPI strategy has no \emph{gap risk} in the sense that its value  stays above the floor with probability one. On the other hand, Cont and Tankov \cite{ContTankovCPPI}, Balder et al. \cite{Balder}, and Paulot and Lacroze \cite{Paulot} show that the CPPI strategy may break through the floor in incomplete market models in which asset prices may jump or in which the portfolio may only be rebalanced at a finite number of trading dates.  In this sense, the CPPI strategy may fail in these settings, and one is faced with the question of quantifying the resulting gap risk, which is important in practice  \cite{Balder,ContTankovCPPI,Paulot}. 

The failure of the CPPI strategy in the incomplete market models of \cite{Balder,ContTankovCPPI,Paulot} on the one hand, and the absence of gap risk in the complete Black \& Scholes framework on the other hand, raise the question  whether the completeness of the underlying market model is  related to the possible nonexistence of gap risk. More generally, one may ask which model features are crucial for setting up a CPPI strategy: 
\begin{itemize}
\item Can one choose every general semimartingale model? 
\item What is the role of arbitrage? 
In particular, must the underlying market model be arbitrage-free to set up the CPPI strategy? 
\item If absence of arbitrage is not essential,  can one even go beyond the class of general semimartingale models and allow for fractal or fractional models such as those in \cite{Benderetal, Cheridito, Salopek, Shiryaev}? 
\item Are there other sources for gap risk apart from jumps in asset prices or discrete rebalancing times? 
\end{itemize}

In this paper, we address all these questions by considering CPPI in the probability-free setting of F\"ollmer's pathwise It\^o calculus \cite{FoellmerIto}; see also \cite{Benderetal1,BickWillinger,DavisRavalObloij,FoellmerECM,FoellmerSchiedBernoulli,SchiedStadje,Sondermann}. In this framework, the dynamics of asset prices are simply described by a single trajectory satisfying a few basic assumptions. In particular, this framework does not postulate any probabilistic mechanism that governs the choice of a particular price evolution. All that is required from the price trajectory of a risky asset is that it is continuous and admits a continuous quadratic variation in a   pathwise sense. These two conditions are satisfied, in particular, by the typical sample paths of any continuous semimartingale, regardless of whether the semimartingale admits an equivalent martingale measure or not. A continuous quadratic variation exists even for a much larger class of trajectories than the class of semimartingale sample paths. An example are the typical sample paths of fractional Brownian motion with Hurst index  $H>\frac12$, which have vanishing quadratic variation.   It is perhaps interesting to note here that vanishing quadratic variation immediately yields the existence of arbitrage opportunities via a simple application of F\"ollmer's pathwise It\^o formula to the function $f(x)=x^2$; see \cite{FoellmerECM} or \cite[Section 5.1]{FoellmerSchiedBernoulli}.

Our first main result will show that, in this very general context, CPPI can be defined as a self-financing trading strategy and that CPPI has no gap risk in the the sense that its value always stays above the floor. This means in particular that neither the completeness nor the absence of arbitrage play any role in the definition of CPPI and for the possible existence of gap risk.  Gap risk is therefore exclusively generated by jumps in the asset price dynamics or by constraints on the rebalancing times of the portfolio.

Our second main result concerns a dynamic extension of the CPPI strategy in which the multiplier level may  depend on quantities including  
time and price evolution. While the possibility of such a Dynamic Proportion Portfolio Insurance (DPPI) has been mentioned several times in the literature, the author was unable to find any corresponding mathematical analysis. Here we treat DPPI in the same strictly pathwise framework as CPPI. We show that in this framework DPPI can always be defined as a self-financing trading strategy and that its value  never breaks through the floor. 

\medskip

The beauty of F\"ollmer's pathwise approach to continuous-time trading lies in the fact that just one single price trajectory is needed. This corresponds to the reality of financial markets, where prices are given only once and the \lq\lq experiment" of pricing a given asset in a specific state of the world can never be repeated.   A proponent of the frequentist interpretation of probability may thus argue that it is therefore anyway impossible to  measure the \lq\lq objective" probability law according to which market scenarios are selected. But even if one does not share such a strong view on the interpretation of probabilistic models of price evolutions,  one will still feel compelled to acknowledge that the complexity of economic dynamics will make it practically impossible to accurately describe the probability law of the price evolution. That is, probabilistic models are subject to Knightian uncertainty and the resulting model risk \cite{Knight}. In recent years, the issue of Knightian uncertainty in finance has received increasing attention; see, e.g., \cite{BrownHobsonRogers,Cont,CoxObloj,FoellmerSchied,GilboaSchmeidler,HansenSargent,Maccheronietal} and \cite[Section 5]{FoellmerSchiedBernoulli}. In this context, the pathwise approach is remarkable as it completely avoids the choice of a probabilistic model. It was known previously that, for example, hedging strategies for variance swaps and related derivatives could be constructed within this pathwise framework \cite{DavisRavalObloij,FoellmerSchiedBernoulli}. The present paper now adds that also CPPI strategies can be constructed in a purely pathwise manner, so that our result can also be viewed as a case study in model uncertainty. 

In the subsequent Section~\ref{Statements Section} we first recall some basic facts about F\"ollmer's pathwise It\^o calculus and its financial implementation. Our main results on  CPPI and DPPI strategies are stated in Theorems~\ref{CPPIProp} and~\ref{main thm}, respectively. The proofs of these results are based on the \emph{associativity} of F\"ollmer's pathwise It\^o integral, which  is a result of independent interest. It is stated, among some other facts on pathwise It\^o calculus, in Section~\ref{Ito Section}. The proofs of Theorems~\ref{CPPIProp} and~\ref{main thm} are given in Section~\ref{Proofs Section}. 

\section{Statement of results}\label{Statements Section}

 Constant Proportion Portfolio Insurance (CPPI) is a self-financing investment strategy that allows for a superlinear participation in future assert returns while simultaneously retaining a guaranteed capital level. 
In the academic literature, this strategy has so far been discussed within various probabilistic models for the evolution of the price process. A common feature of these studies is that price processes are assumed to be semimartingales and market models are often taken as complete. Yet, it is a well-known fact that in a financial context the choice of a probabilistic model is typically itself subject to Knightian uncertainty; see, e.g.,  \cite[Section 5]{FoellmerSchiedBernoulli}. Our goal in this paper is to show that this restriction to probabilistic semimartingale models is unnecessary in the case of CPPI  strategies. We will show that the strategy  works in a strictly pathwise setting that not only includes all continuous semimartingales but also applies to the sample paths of many stochastic processes that are not semimartingales such as price processes based on fractional Brownian motion.  More precisely, we will work in a probability-free framework that is based on  F\"ollmer's pathwise It\^o calculus \cite{FoellmerIto}. In the context of a financial market model,  this pathwise It\^o calculus has been applied to the hedging of derivatives in 
 \cite{BickWillinger} and \cite{FoellmerECM}; see also \cite{Sondermann} for an introduction  and \cite[Section 5.1]{FoellmerSchiedBernoulli} for a short, recent survey.

The beauty of  the probability-free framework is to assume just one price trajectory as given. We assume that this trajectory includes two assets, a locally riskless bond and a risky asset. Bond prices are described by
\begin{equation}\label{bond eq}
B_t=\exp\Big(\int_0^tr_s\,ds\Big),
\end{equation}
where $r:[0,\infty)\to\bR$ is measurable and satisfies $\int_0^t|r_s|\,ds<\infty$ for all $t>0$. Prices of the risky asset are modeled by a single continuous function $S:[0,\infty)\to (0,\infty)$. 

In discrete time, trading is possible at time points $0=t_0<t_1<\cdots$, and we assume that $\lim_nt_n=+\infty$.  The set $\bT=\{t_0,t_1,\dots\}$ is the corresponding \emph{time grid}. 
Continuous-time trading needs to be defined in terms of an approximation from discrete time. To this end, we fix a  sequence $(\bT_N)_{N\in\bN}$ of time grids satisfying $\bT_1\subset\bT_2\subset\cdots$ and $\lim_N\sup_{t_i\in\bT_N}|t_{i+1}-t_i|=0$. An example of such a sequence is provided by the dyadic time grids, $\bT_N=\{k2^{-N}\,|\,k=0,1,\dots\}$. Following F\"ollmer \cite{FoellmerIto}, we will say that a continuous trajectory $X:[0,\infty)\to\bR$ has  \emph{continuous quadratic variation $[X]$ along the sequence $(\bT_N)$} if for each $t>0$ the limit
\begin{equation}\label{quadratic variation}
[X]_t:=\lim_{N\ua\infty} \sum_{\stackrel{t_i, t_{i+1}\in\bT_N}{t_{i+1}\leq t}} 
\left(X_{t_{i+1}}-X_{t_i}\right)^2
\end{equation}
exists, and if $t\mapsto[X]_t$ becomes a continuous function on $[0,\infty)$ for the choice $[X]_0=0$. Note that $t\mapsto[X]_t$ is nondecreasing and hence locally of finite variation. 
The existence of the continuous quadratic variation $[X]$  along $(\bT_N)$ guarantees that $X$ can serve as an integrator in F\"ollmer's pathwise It\^o calculus \cite{FoellmerIto}. 
We  state below the corresponding pathwise It\^o formula in the form in which it will be needed for the statement of our results on CPPI. Their proofs will require a  more general, multidimensional version, which is given in Section~\ref{Ito Section}. 

The class  $C^{1,2}(\bR^n\times\bR)$ will consist of all functions $f(\bm a,x)$ that are continuously differentiable in $(\bm a,x)\in \bR^n\times\bR$ and twice continuously differentiable in $x\in\bR$.  We will write $f_{a^k}$ for the partial derivative of $f$ with respect to the $k^{\text{th}}$ coordinate of the vector $\bm a=(a^1,\dots, a^n)$ and $f_x$ and $f_{xx}$ for the first and second partial derivatives with respect to $x$. 

\medskip

\begin{theorem}[F\"ollmer \cite{FoellmerIto}]\label{FoellmerThm}  Suppose that the continuous trajectory $X$ admits the continuous quadratic variation $[X]$  along $(\bT_N)$, that $\bm A:[0,\infty)\to\bR^n$ is a continuous function whose components are locally of finite variation, and that $f\in C^{1,2}(\bR^n\times\bR)$. Then 
\begin{eqnarray*}f(\bm A_t,X_t)-f(\bm A_0,X_0)=\sum_{k=1}^n\int_0^tf_{a^k}(\bm A_s,X_s)\,dA_s^k+\int_0^tf_x(\bm A_s,X_s)\,dX_s+\frac12\int_0^tf_{xx}(\bm A_s,X_s)\,d[X]_s,
\end{eqnarray*}
where $\int_0^tf_{a^k}(\bm A_s,X_s)\,dA_s^k$ and $\int_0^tf_{xx}(\bm A_s,X_s)\,d[X]_s$ are taken in the usual sense of Riemann--Stieltjes integrals and the \emph{pathwise It\^o integral} $\int_0^tf_x(\bm A_s,X_s)\,dX_s$ is given by the following limit of nonanticipative Riemann sums:
\begin{equation}\label{pathwise Ito integral}
\int_0^tf_x(\bm A_s,X_s)\,dX_s=\lim_{N\ua\infty}\sum_{\stackrel{t_i, t_{i+1}\in\bT_N}{t_{i+1}\leq t}}f_x(\bm A_{t_i},X_{t_i})(X_{t_{i+1}}-X_{t_i}).
\end{equation}
\end{theorem}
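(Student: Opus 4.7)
My plan is to follow the classical telescope-plus-Taylor argument of F\"ollmer. Fix $t>0$; for each $N\in\bN$, the left-hand side of the claim equals the telescoping sum
\[
f(\bm A_t,X_t)-f(\bm A_0,X_0)=\sum_{\stackrel{t_i,t_{i+1}\in\bT_N}{t_{i+1}\leq t}}\bigl[f(\bm A_{t_{i+1}},X_{t_{i+1}})-f(\bm A_{t_i},X_{t_i})\bigr].
\]
I split each summand as $[f(\bm A_{t_{i+1}},X_{t_{i+1}})-f(\bm A_{t_i},X_{t_{i+1}})]+[f(\bm A_{t_i},X_{t_{i+1}})-f(\bm A_{t_i},X_{t_i})]$. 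Applying the mean value theorem coordinate-wise in $\bm a$ to the first bracket and Taylor's formula to second order in $x$ to the second, there exist some $\bm A^{(i)}$ on the segment joining $\bm A_{t_i}$ and $\bm A_{t_{i+1}}$ and some $\xi_i$ between $X_{t_i}$ and $X_{t_{i+1}}$ for which the $i$-th increment equals
\[
\sum_{k=1}^n f_{a^k}(\bm A^{(i)},X_{t_{i+1}})(A^k_{t_{i+1}}-A^k_{t_i})+f_x(\bm A_{t_i},X_{t_i})(X_{t_{i+1}}-X_{t_i})+\tfrac12 f_{xx}(\bm A_{t_i},\xi_i)(X_{t_{i+1}}-X_{t_i})^2.
\]
Summing over $i$ and letting $N\ua\infty$ produces three sums that I treat separately.

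The sum of the $f_{a^k}$-terms converges to $\sum_k\int_0^t f_{a^k}(\bm A_s,X_s)\,dA^k_s$ by a standard Riemann--Stieltjes argument: each $A^k$ is of finite variation on $[0,t]$, the continuous functions $\bm A$ and $X$ are uniformly continuous on $[0,t]$, and $f_{a^k}$ is uniformly continuous on the compact set $\{(\bm A_s,X_s):s\in[0,t]\}$, so replacing $(\bm A^{(i)},X_{t_{i+1}})$ by $(\bm A_{t_i},X_{t_i})$ introduces an error that vanishes as the mesh tends to $0$. For the $f_{xx}$-sum, the hypothesis~(\ref{quadratic variation}) says that the distribution functions $s\mapsto\mu_N([0,s])$ of the discrete measures $\mu_N:=\sum_i(X_{t_{i+1}}-X_{t_i})^2\delta_{t_i}$ converge pointwise to $[X]_s$; since $s\mapsto[X]_s$ is continuous and $\mu_N([0,\cdot])$ is nondecreasing, this pointwise convergence is automatically uniform on $[0,t]$, which in turn implies weak convergence of $\mu_N$ to the Lebesgue--Stieltjes measure $d[X]$ on $[0,t]$. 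Combined with the continuity of $s\mapsto f_{xx}(\bm A_s,X_s)$ and absorption of the error $f_{xx}(\bm A_{t_i},\xi_i)-f_{xx}(\bm A_{t_i},X_{t_i})\to 0$ by uniform continuity of $f_{xx}$, this yields $\tfrac12\sum_i f_{xx}(\bm A_{t_i},\xi_i)(X_{t_{i+1}}-X_{t_i})^2\to\tfrac12\int_0^t f_{xx}(\bm A_s,X_s)\,d[X]_s$.

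Finally, since the left-hand side of the telescoping identity is independent of $N$ and two of the three right-hand-side sums converge, the remaining sum $\sum_i f_x(\bm A_{t_i},X_{t_i})(X_{t_{i+1}}-X_{t_i})$ must also converge, and its limit is by definition the pathwise It\^o integral of~(\ref{pathwise Ito integral}); passing to the limit in the telescoping identity then yields the claim. The main delicate step is the weak-convergence statement for $\mu_N$: continuity of $[X]$ is essential here, since pointwise convergence of nondecreasing distribution functions toward a limit with atoms would not suffice. Everything else---uniform continuity on compacta, Riemann--Stieltjes approximation for finite-variation integrators, and the second-order Taylor remainder estimate---is a routine consequence of the hypothesis $f\in C^{1,2}$ and the regularity of $\bm A$ and $X$.
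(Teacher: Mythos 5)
Your argument is correct and is essentially the proof the paper relies on: it reproduces F\"ollmer's telescoping-plus-Taylor decomposition together with the vague convergence of the point measures $\sum_i(X_{t_{i+1}}-X_{t_i})^2\delta_{t_i}$ toward $d[X]$, which is exactly the machinery the paper cites from \cite{FoellmerIto} and redeploys in its proofs of Theorems~\ref{FoellmerThm2} and~\ref{associativity thm}. The only cosmetic point is that when $t\notin\bT_N$ the telescoping sum equals $f(\bm A_{t'},X_{t'})-f(\bm A_0,X_0)$ for the largest grid point $t'\le t$ rather than the left-hand side itself, so one concluding appeal to the continuity of $f$, $\bm A$, and $X$ as the mesh shrinks is needed.
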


\medskip

The preceding theorem implies in particular 
the existence of the  pathwise It\^o integral \eqref{pathwise Ito integral}. We therefore can define a class of admissible integrands:

\medskip

\begin{definition}\label{AdmissibleIntegrandd=1Def} Suppose that the continuous trajectory $X$ admits the continuous quadratic variation $[X]$  along $(\bT_N)$. 
A real-valued function $t\mapsto\xi_t$ is called an \emph{admissible integrand for $X$} if for each $T>0$ there exists $n\in\bN$, a function $g\in C^1(\bR^{n+1})$, and a continuous function $\bm A:[0,\infty)\to\bR^n$  whose components are of finite variation on $[0,T]$ such that $\xi_t=g(\bm A_t,X_t)$ for $0\le t\le T$.
\end{definition}

When $\xi$ is an admissible integrand for $X$ and $g$ and $\bm A$ are as in Definition~\ref{AdmissibleIntegrandd=1Def}, then $f(\bm a,x):=\int_0^x g(\bm a,y)\,dy$ belongs to $C^{1,2}(\bR^n\times\bR)$, and Theorem~\ref{FoellmerThm} implies that the It\^o integral
$$\int_0^tf_x(\bm A_s,X_s)\,dX_s=\int_0^t g(\bm A_s,X_s)\,dX_s=\int_0^t\xi_s\,dX_s
$$ 
can be defined through the limit on the right-hand side of \eqref{pathwise Ito integral}.

\medskip

Let us now return to our financial context, in which bond prices are given by \eqref{bond eq} and prices of the risky asset are modeled by a continuous path $S:[0,\infty)\to (0,\infty)$. We will assume from now on that $S$ admits the continuous quadratic variation $[S]$ along $(\bT_N)$. 
A trading strategy will be a pair $(\xi,\eta)$ of of functions $[0,\infty)$, where $\xi_t$ describes the number of shares in the risky asset that are held at time $t$, while $\eta_t$ stands for the number of shares in the bond. Using pathwise It\^o calculus, 
we can now define the notion of a self-financing trading strategy:

\begin{definition}\label{self-financingDef}Let $(\xi,\eta)$ be a pair of real-valued measurable functions such that $\xi$ is an admissible integrand for $S$ and $\int_0^t|\eta_sr_s|\,ds<\infty$  for all $t\ge0$. The pair $(\xi,\eta)$ is called a \emph{self-financing  strategy} if the corresponding \emph{portfolio value},
$$V_t:=\xi_tS_t+\eta_tB_t,\quad t\ge0,
$$
satisfies the identity
$$V_t=V_0+\int_0^t\xi_s\,dS_s+\int_0^t\eta_s\,dB_s,\quad t\ge0.
$$
\end{definition}

\medskip

\begin{remark}
In the preceding definition, trading strategies are based on the notion of admissible integrands introduced in Definition~\ref{AdmissibleIntegrandd=1Def}. It is worth pointing out that  this notion allows for a large class of integrands, which, for instance, includes the delta hedging strategies for many practically relevant exotic and plain-vanilla options in Markovian market models such as geometric Brownian motion or local volatility; see \cite{SchiedStadje}. Moreover, for $\bm A$ in Definition~\ref{AdmissibleIntegrandd=1Def} one can take a continuous function of moving averages, $t\mapsto \int_{(t-\delta)^+}^t S_s\,ds$, or running maxima, $t\mapsto \max_{(t-\delta)^+\le s\le t} S_s$, because these are continuous functions of $t$ with  finite variation on every interval $[0,T]$. 
\end{remark}

\medskip

We can now proceed toward defining the CPPI strategy in our model-free setting. At time $t=0$, one is given the initial capital $V_0>0$,  a {security level} $\alpha\in[0,1]$, and a multiplier $m>0$.  The security level specifies the proportion of the initial capital that one is not willing to risk. That is, the portfolio value  should never fall below the \emph{floor} $\alpha V_0B_t$, which one would have attained by investing the fraction $\alpha V_0$ of the initial capital into the bond right from the start.   Now suppose 
 that  the portfolio value $V_t$ of the CPPI strategy at time $t$ is already given. The amount
\begin{equation}\label{cushion}
C_t:=V_t-\alpha V_0B_t
\end{equation}
by which the portfolio value exceeds the floor $\alpha V_0B_t$ is called the \emph{cushion}.  
  The cushion should always be nonnegative so that the amount $V_t$ is indeed bounded from below by $\alpha V_0B_t$ at any time.
 In executing a CPPI strategy we invest  a  multiple $m>0$  of the cushion  into the risky asset. That means that we should have
\begin{eqnarray}\label{xi}
\xi_t:=\frac{m C_t}{S_t}.
\end{eqnarray}
The remaining capital is invested into the riskless asset, i.e.,
\begin{eqnarray} \label{eta}
\eta_t=\frac{V_t-\xi_tS_t}{B_t}=\frac{V_t-mC_t}{B_t}.
\end{eqnarray}
Note that we can  have $mC_t>V_t$, which means that it is possible that the CPPI strategy does not include any risk-free investment and, instead, is short in cash. The formulas \eqref{xi} and \eqref{eta} provide a feedback description of the CPPI strategy. It is, however, not clear \emph{a priori} that this feedback description  gives rise to a self-financing strategy. More precisely, the following  question arises:
\begin{itemize}
\item Does there exist a self-financing strategy $(\xi,\eta)$ whose portfolio value  $V_t=\xi_t S_t+\eta_t B_t$ is such that the identities \eqref{cushion}, \eqref{xi}, and \eqref{eta} hold?
\end{itemize}
When this question can be answered affirmatively, the following two questions arise: 
\begin{itemize}
\item  Is the CPPI strategy free of gap risk?  That is, does the portfolio value $V_t$ of the CPPI strategy always exceed the floor $\alpha V_0B_t$ or, equivalently, do we have $C_t\ge0$ for all $t\ge0$?
\item Are CPPI strategies unique in the sense that there can be at most one unique self-financing strategy $(\xi,\eta)$ such that \eqref{cushion}, \eqref{xi}, and \eqref{eta} hold? 
\end{itemize}
Our first main results yields that all three questions can be answered affirmatively. In view of the generality of our setup, this result implies in particular that notions of market completeness or absence of arbitrage are not needed for CPPI to work. It also follows that  gap risk  is not caused by issues such as market incompleteness. Gap risk only results when one is not able to instantaneously adjust the portfolio in response to asset price changes, as it occurs in the presence of jumps \cite{ContTankovCPPI} or under constraints on the available trading dates \cite{Balder,Paulot}.

\medskip

\begin{theorem}\label{CPPIProp} For given $V_0\ge0$, $\alpha\in[0,1]$, and   $m>0$, we define
\begin{align}C_t&=(1-\alpha)V_0\Big(\frac{S_t}{S_0}\Big)^mB_t^{1-m}e^{-\frac{1}2m(m-1)[\log S]_t}\label{CPPIcushionDefEq}
\end{align}
and 
\begin{equation}\label{CPPIVtDefEq}
V_t:=C_t+\alpha V_0 B_t.
\end{equation}
Then the equations \eqref{cushion}, \eqref{xi}, and \eqref{eta} define a self-financing  strategy $(\xi,\eta)$  with associated portfolio value  $V$. In particular the CPPI strategy has no gap risk in the sense that its portfolio value always stays above the floor:
$$V_t\ge \alpha V_0B_t\qquad\text{for all $t\ge0$.}
$$
Moreover, $(\xi,\eta)$ is the unique self-financing trading strategy for which \eqref{cushion}, \eqref{xi}, and \eqref{eta} are satisfied.
\end{theorem}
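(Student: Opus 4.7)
The plan is to decompose the statement into three tasks: verify existence of the prescribed strategy by a direct application of Föllmer's pathwise Itô formula to \eqref{CPPIcushionDefEq}, read off the no-gap-risk bound from the same formula, and establish uniqueness through a pathwise integration-by-parts argument resting on the associativity of Föllmer's integral. The existence and no-gap-risk parts should be computational; uniqueness will be the main obstacle.

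For existence, I would write $C_t = F(B_t, [\log S]_t, S_t)$ with $F(b,a,x) = (1-\alpha)V_0 S_0^{-m} b^{1-m} e^{-\frac{1}{2}m(m-1)a} x^m$, which is $C^{1,2}$ on the open set $\{b,x>0\}$ containing the relevant trajectories. Applying Theorem~\ref{FoellmerThm} with the finite-variation vector $\bm A_t=(B_t,[\log S]_t)$ and computing the partial derivatives of $F$ produces $d[S]_t$-contributions from $F_{xx}$ and, after using the pathwise identity $d[\log S]_t = S_t^{-2}\,d[S]_t$ (itself an application of Theorem~\ref{FoellmerThm} to $\log$), from $F_a$; these cancel exactly and leave $dC_t = (1-m) r_t C_t\,dt + m(C_t/S_t)\,dS_t$. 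Combining with $d(\alpha V_0 B_t)=\alpha V_0 r_t B_t\,dt$ and substituting \eqref{xi}-\eqref{eta} gives $dV_t = \xi_t\,dS_t + \eta_t\,dB_t$, so $(\xi,\eta)$ is self-financing; admissibility of $\xi = mC/S$ is evident since it is of the form $g(B_\cdot,[\log S]_\cdot,S_\cdot)$ with $g$ smooth on the relevant domain. The bound $V_t \geq \alpha V_0 B_t$ is then immediate from \eqref{CPPIcushionDefEq}, as the factors $B_t$, $S_t$ and $(1-\alpha)V_0$ are all nonnegative.

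For uniqueness, let $(\tilde\xi,\tilde\eta)$ be any self-financing strategy satisfying \eqref{cushion}-\eqref{eta} with cushion $\tilde C$. Substituting $\tilde\xi_s = m\tilde C_s/S_s$ and $\tilde\eta_s=(\tilde V_s-m\tilde C_s)/B_s$ into the self-financing identity and simplifying reduces it to the linear pathwise-Itô equation $\tilde C_t = (1-\alpha)V_0 + (1-m)\int_0^t r_s\tilde C_s\,ds + m\int_0^t (\tilde C_s/S_s)\,dS_s$, which $C_t$ also satisfies. Setting $E_t := C_t/((1-\alpha)V_0)$ and $Y_t := 1/E_t$, Theorem~\ref{FoellmerThm} gives $dY_t = (m-1) r_t Y_t\,dt - m(Y_t/S_t)\,dS_t + m^2 (Y_t/S_t^2)\,d[S]_t$. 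Applying the pathwise Itô product rule to $\tilde C_t Y_t$ and including the cross-variation contribution $-m^2\tilde C_t Y_t/S_t^2\,d[S]_t$ (read off from the $dS$-coefficients of the two Itô expansions), one finds that the coefficients of $dS_t$, $r_t\,dt$, and $d[S]_t$ cancel simultaneously. Hence $d(\tilde C_t Y_t)\equiv 0$, so $\tilde C_t Y_t = \tilde C_0 Y_0 = (1-\alpha)V_0$, and therefore $\tilde C_t = (1-\alpha)V_0 E_t = C_t$.

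The main obstacle is precisely this last product-rule computation: $\tilde C$ is defined only implicitly through an Itô integral equation and is not a priori a $C^{1,2}$ functional of the underlying variables, so Theorem~\ref{FoellmerThm} cannot be applied directly to $\tilde C_t Y_t$. The missing link---an integration-by-parts identity $d(\tilde C_t Y_t) = \tilde C_t\,dY_t + Y_t\,d\tilde C_t + d[\tilde C,Y]_t$ interpreted in a purely pathwise sense, with the cross-variation computed from the $dS$-coefficients of the two Itô expansions---is exactly what the \emph{associativity} of the pathwise Itô integral developed in Section~\ref{Ito Section} is designed to provide.
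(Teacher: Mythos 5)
Your proposal is correct in substance but takes a genuinely different route from the paper. The paper does not prove Theorem~\ref{CPPIProp} directly: it derives it as a corollary of the DPPI result, Theorem~\ref{main thm}, by showing that for constant $m$ the general cushion formula \eqref{Ct eq} collapses to \eqref{CPPIcushionDefEq} (via $\int_0^t\frac m{S_s}\,dS_s=m\log S_t-m\log S_0+\frac m2\int_0^t S_s^{-2}\,d[S]_s$ and $[\log S]_t=\int_0^t S_s^{-2}\,d[S]_s$), so that existence, the floor bound, and uniqueness are all inherited from the proof of Theorem~\ref{main thm}. You instead exploit the fact that for constant $m$ the cushion is an \emph{explicit} $C^{1,2}$ function $F(B_t,[\log S]_t,S_t)$, so that the self-financing property follows from a single direct application of the one-dimensional Theorem~\ref{FoellmerThm} with $\bm A_t=(B_t,[\log S]_t)$ --- no associativity is needed for existence, which is a genuine simplification unavailable in the DPPI case, where $C_t$ is an exponential of an It\^o integral. (Two small points of hygiene: $F$ must be modified outside a neighbourhood of the range of $(B,S)$ on $[0,T]$ to be globally $C^{1,2}$, exactly as the paper does with its function $f(y^1,y^2)=y^1/y^2$; and the identity $d[\log S]_t=S_t^{-2}\,d[S]_t$ is a quadratic-variation transformation result, cited by the paper from Sondermann, rather than a direct instance of Theorem~\ref{FoellmerThm}.) For uniqueness your product argument, showing $\wt C_t\cdot(1/E_t)$ is constant, is the mirror image of the paper's ratio argument, which discounts both cushions by $e^{-\int_0^t(1-m_s)r_s\,ds}$ to reduce them to pure It\^o integral equations, invokes Proposition~\ref{covariation Prop} to obtain the covariations of the pair, applies the two-dimensional Theorem~\ref{FoellmerThm2} to $f(y^1,y^2)=y^1/y^2$, and uses Theorem~\ref{associativity thm} to see that the $dS$-coefficient vanishes. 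You correctly diagnose that the product rule for $\wt C_tY_t$ cannot come from Theorem~\ref{FoellmerThm} alone because $\wt C$ is only given implicitly; to close that gap you need precisely this combination --- Proposition~\ref{covariation Prop} for $[\wt C,Y]$, Theorem~\ref{FoellmerThm2} applied to the pair (after absorbing the $r_s\,ds$ drifts into the finite-variation component $\bm A$, or discounting as the paper does), and then Theorem~\ref{associativity thm} --- not associativity by itself. With that step executed, your argument is a complete and self-contained proof of Theorem~\ref{CPPIProp}; what it gives up is that, unlike the paper's route, it does not simultaneously deliver the DPPI generalization.
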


\medskip

\begin{remark}It is interesting to analyze the various terms in \eqref{CPPIcushionDefEq} in regards of their contributions to the return of the CPPI strategy. In a Black--Scholes setting, which provides the framework for most academic studies on CPPI strategies, $B_t$ and $[\log S]_t $ are deterministic quantities and  can be treated as constants when $t$ is fixed. Therefore the performance of the CPPI strategy can be described as  a constant times the $m^{\text{th}}$ power, $(S_t/S_0)^m$, of asset returns.  This view, however, conceals some of the downside risks that are associated with volatile model parameters. The impact of volatile interest rates is described by the term $B_t^{1-m}$, which, for the common case $m>1$, will decrease returns when interest rates go up. Next, the term $e^{-\frac{1}2m(m-1)[\log S]_t}$ describes the influence of  volatility on the return of the CPPI strategy, because 
\begin{equation}\label{realized variance}
[\log S]_t=\lim_{N\ua\infty} \sum_{\stackrel{t_i, t_{i+1}\in\bT_N}{t_{i+1}\leq t}} 
\left(\log S_{t_{i+1}}-\log S_{t_i}\right)^2
\end{equation}
 is often called the realized variance of $S$.  The reason for this terminology is the fact that an approximating sum on the right-hand side of \eqref{realized variance} can be regarded as the payoff of a variance swap with maturity $t$; see \cite{Buehler}. Moreover, it follows from \cite[Proposition 2.2.10]{Sondermann} that $[\log S]_t=\int_0^t\sigma_s^2\,ds$ when $d[S]_t=\sigma_t^2S_t^2\,dt$.   Formula \eqref{CPPIcushionDefEq} thus states that an increase in realized variance adversely impacts returns by way of the exponential function $x\mapsto e^{-\frac{1}2m(m-1)x}$.
\end{remark}

\medskip

Theorem~\ref{CPPIProp} is in fact a corollary of our following, more general result. It deals with the situation in which the multiplier $m$ is not chosen as a constant but may vary in time.  Such an extension to \emph{Dynamic Proportion Portfolio Insurance (DPPI)} is natural, because  the multiplier $m$ in the CPPI strategy can be regarded as a measure for the leverage of the CPPI investment strategy, and one may wish to choose varying amounts of leverage over time. For instance for a pension fund with a fixed retirement date it can make sense to start off with a high leverage and to revert to a more conservative, lower leverage factor as retirement approaches. Moreover, leverage should be allowed to depend on the current spot, interest rates, and on performance indicators such as realized variance, moving averages, or running maxima.
We model this dynamic adjustment of  leverage  by a continuous   multiplier function $m_t\ge0$. As before, when a security level $\alpha\in[0,1]$ and the value of the investment strategy $V_t$ at time $t$ are given, we define the cushion $C_t$ by 
\begin{equation}\label{cushion DPPI}
C_t=V_t-\alpha V_0B_t\ge0
\end{equation}
 and make the following respective allocations into risky asset and bond:
\begin{equation}\label{xi eta}
\xi_t=\frac{m_tC_t}{S_t}\qquad\text{and}\qquad \eta_t=\frac{V_t-\xi_tS_t}{B_t}=\frac{V_t-m_tC_t}{B_t}.
\end{equation}

\medskip

\begin{theorem}\label{main thm}Suppose that $\alpha\in[0,1]$ and $V_0\ge0$ are given and that $m_t$ is an admissible integrand for $S$. Then $m_t/S_t$ is an admissible integrand for $S$. If we define
\begin{equation}\label{Ct eq}
C_t:=(1-\alpha)V_0\exp\bigg(\int_0^t\frac{m_s}{S_s}\,d S_s-\frac12\int_0^t\frac{m_s^2}{S_s^2}\,d[ S]_s+\int_0^t(1-m_s)r_s\,ds\bigg)
\end{equation}
and
\begin{equation}\label{}
V_t:=C_t+\alpha V_0B_t,
\end{equation}
then \eqref{cushion DPPI} and \eqref{xi eta} defines a self-financing trading strategy with portfolio value  $V$. In particular, the DPPI strategy has no gap risk in the sense that its value never breaks through the floor:
$$V_t\ge \alpha V_0B_t\qquad\text{for all $t\ge0$.}
$$
Moreover, $(\xi,\eta)$ is the unique self-financing trading strategy for which \eqref{cushion DPPI} and \eqref{xi eta}  are satisfied.

\end{theorem}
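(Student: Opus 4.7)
The plan is to reduce the theorem to a direct application of the multidimensional pathwise It\^o formula together with the associativity of F\"ollmer's integral promised in Section~\ref{Ito Section}. First I would verify admissibility of $m_t/S_t$: writing $m_t = g(\bm A_t,S_t)$ with $g\in C^1$ and $\bm A$ of finite variation on $[0,T]$, the function $(\bm a,x)\mapsto g(\bm a,x)/x$ is $C^1$ on $\bR^n\times(0,\infty)$, so a standard $C^1$ cut-off outside the compact range of $S$ on $[0,T]$ extends it to an element of $C^1(\bR^{n+1})$ that agrees with $g/x$ where needed. The same cut-off trick, together with the fact that $C_t$ itself will turn out to be representable as a smooth function of $(\widetilde{\bm A}_t,S_t)$ for an augmented finite-variation path $\widetilde{\bm A}$, will deliver admissibility of $\xi_t=m_tC_t/S_t$ at the end.

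Next I would set $L_t:=\int_0^t\frac{m_s}{S_s}\,dS_s - \frac12\int_0^t\frac{m_s^2}{S_s^2}\,d[S]_s+\int_0^t(1-m_s)r_s\,ds$, so that $C_t=(1-\alpha)V_0 e^{L_t}$. Associativity of the pathwise It\^o integral yields $[L]_t=\int_0^t\frac{m_s^2}{S_s^2}\,d[S]_s$, since the two Riemann--Stieltjes summands contribute nothing to the quadratic variation. Applying pathwise It\^o to the exponential (again reducing $L$ to a function of $(\widetilde{\bm A},S)$ via Theorem~\ref{FoellmerThm} used in reverse on its It\^o part) gives
$$dC_t = C_t\Big(\tfrac{m_t}{S_t}\,dS_t-\tfrac12\tfrac{m_t^2}{S_t^2}\,d[S]_t+(1-m_t)r_t\,dt\Big)+\tfrac12 C_t\tfrac{m_t^2}{S_t^2}\,d[S]_t = \tfrac{m_tC_t}{S_t}\,dS_t + (1-m_t)r_tC_t\,dt.$$
Combining with $dV_t=dC_t+\alpha V_0 r_tB_t\,dt$ and the identity $\eta_tr_tB_t=(V_t-m_tC_t)r_t=(1-m_t)r_tC_t+\alpha V_0 r_tB_t$ gives the self-financing relation $dV_t=\xi_t\,dS_t+\eta_t\,dB_t$. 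The no-gap-risk bound $V_t\ge \alpha V_0B_t$ is immediate from the manifest positivity of the exponential in $C_t$.

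For uniqueness, let $(\xi',\eta')$ be another self-financing strategy satisfying \eqref{cushion DPPI} and \eqref{xi eta}, with cushion $C'_t$. Reversing the computation of the previous paragraph shows that $C'_t$ satisfies the same pathwise equation $dC'_t=\tfrac{m_tC'_t}{S_t}\,dS_t+(1-m_t)r_tC'_t\,dt$ with initial value $(1-\alpha)V_0$. Applying the pathwise product rule (a corollary of the multidimensional It\^o formula) to $Z_t:=C'_t e^{-L_t}$, the cross-variation term $d[C',e^{-L}]_t=-e^{-L_t}\tfrac{m_t^2C'_t}{S_t^2}\,d[S]_t$ cancels exactly against the It\^o corrections, leaving $dZ_t=0$. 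Hence $Z_t\equiv (1-\alpha)V_0$ and $C'_t=C_t$.

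The main technical difficulty lies not in any single step but in the bookkeeping required to fit pathwise It\^o integrals such as $L_t$ and its exponential into the admissible-integrand framework of Definition~\ref{AdmissibleIntegrandd=1Def}: each It\^o integral of an admissible integrand has to be rewritten via Theorem~\ref{FoellmerThm} applied in reverse as a smooth function of $(\bm A_t,S_t)$ plus finite-variation corrections that are absorbed into an augmented path, so that the next It\^o step (and ultimately the admissibility of $\xi_t$) is legal. The associativity statement of Section~\ref{Ito Section} is exactly what makes this bookkeeping collapse to the clean calculations displayed above.
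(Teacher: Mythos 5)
Your proposal is correct and, for the existence and self-financing part, follows essentially the same route as the paper: represent $C_t$ as an exponential, apply the pathwise It\^o formula, and use the associativity theorem to rewrite everything as integrals against $S$ and $B$; your ``reverse It\^o'' bookkeeping for admissibility of $\xi_t=m_tC_t/S_t$ is exactly the mechanism the paper uses (it is the content of equation \eqref{Y wt f eq} in the proof of Theorem~\ref{associativity thm}, invoked there and in the proof of Theorem~\ref{main thm} via ``applying Theorem~\ref{associativity thm} several times''). The one genuine divergence is the uniqueness step. The paper strips off the drift from both cushions, obtains two solutions $Y^{(1)},Y^{(2)}$ of the same linear pathwise It\^o equation, and applies the two-dimensional It\^o formula to a $C^2$ modification of $f(y^1,y^2)=y^1/y^2$, which requires the a priori lower bound $Y^{(2)}_t\ge\eps$ coming from the explicit formula \eqref{Ct eq}. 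You instead show $Z_t=C'_te^{-L_t}$ is constant via the product rule. Your route buys a globally smooth test function (no cutoff, no lower bound on the competing solution needed), at the price of having to justify that $(C',e^{-L})$ is a legitimate two-dimensional integrator: before invoking the product rule you must establish, via Proposition~\ref{covariation Prop} and Remark~\ref{FV remark} applied to the It\^o equation satisfied by $C'$, that $[C']$, $[e^{-L}]$ and $[C',e^{-L}]$ exist and have the claimed form --- you assert the cross-variation formula but do not derive it, and this is precisely the point the paper is careful about in \eqref{Yi q var eq}. Also, the identity $[L]_t=\int_0^t m_s^2S_s^{-2}\,d[S]_s$ is a consequence of Proposition~\ref{covariation Prop} together with Remark~\ref{FV remark} (the finite-variation summands have vanishing quadratic variation and covariation), not of associativity; with these two points tightened, your argument is complete.
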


\medskip


\section{Complements on F\"ollmer's pathwise It\^o calculus}\label{Ito Section}

Pathwise It\^o calculus goes back to F\"ollmer \cite{FoellmerIto}, where a strictly pathwise It\^o formula was proved. This topic was further developed in the lectures of Hans F\"ollmer, some of which form the basis of the book \cite{Sondermann}. The proofs of   Theorems~\ref{CPPIProp} and~\ref{main thm}   require some techniques in pathwise It\^o calculus that go beyond the material in \cite{FoellmerIto,Sondermann}. In particular, we need the so-called associativity of the pathwise It\^o integral. This property is stated in Theorem~\ref{associativity thm} and is of independent interest.

The statements of Theorems~\ref{CPPIProp} and~\ref{main thm} involve only the pathwise It\^o formula in the one-dimensional form of Theorem~\ref{FoellmerThm}; their proofs and Theorem~\ref{associativity thm}  require a $d$-dimensional integrator $\bm X_t=(X^1_t,\dots,X^d_t)$. So let us recall the  pathwise It\^o formula in the multidimensional form in which it will henceforth be needed. To enhance the readability, we will write multidimensional objects in boldface type. 

We  fix a  sequence $(\bT_N)_{N\in\bN}$ of time grids satisfying $\bT_1\subset\bT_2\subset\cdots$ and $\lim_N\sup_{t_i\in\bT_N}|t_{i+1}-t_i|=0$. We moreover suppose that  $\bm X:[0,\infty)\to\bR^d$ is continuous and that for all $k$ and $m$ the real-valued path $X^k_t+X^m_t$ has continuous quadratic variation $[X^k+X^m]$. This assumption is equivalent to the existence of the \emph{covariation of $X^k$ and $X^m$} defined by
\begin{equation}\label{covariation eq}
\begin{split}
[X^k,X^m]_t&:=\frac12\Big([X^k+X^m]_t-[X^k]_t-[X^m]_t\Big)\\
&=\lim_{N\ua\infty} \sum_{\stackrel{t_i, t_{i+1}\in\bT_N}{t_{i+1}\leq t}} 
\big(X^k_{t_{i+1}}-X^k_{t_i}\big)\big(X^m_{t_{i+1}}-X^m_{t_i}\big).
\end{split}
\end{equation}
Here the latter identity follows by polarization of the corresponding sums in \eqref{quadratic variation}. Clearly, $[X^k]$ exists as $\frac14[X^k+X^k]$. Note that $[X^k,X^m]_t$ is locally of finite variation as a function of $t$, because it is the difference of the nonincreasing functions $[X^k+X^m]_t$ and $[X^k]_t+[X^m]_t$. 

\medskip

\begin{remark}\label{FV remark}We will need the following facts that   easily follow from Propositions 2.2.2, 2.2.9, and 2.3.2 in \cite{Sondermann}. Suppose that $Y$ is continuous and admits the continuous quadratic variation $[Y]$ along $(\bT_N)$ and $A$ is continuous and locally of finite variation. Then both $[A]$ and $[Y+A]$ exist  along $(\bT_N)$  and are given by $[A]=0$ and $[Y+A]=[Y]$. By means of the polarization identity \eqref{covariation eq} we get moreover that $[Y,A]=0$.
\end{remark}

\medskip

The class  $C^{1,2}(\bR^n\times\bR^d)$ will consist of all functions $f(\bm a,\bm x)$ that are continuously differentiable in $(\bm a,\bm x)\in \bR^n\times\bR^d$ and twice continuously differentiable in $\bm x\in\bR^d$.  We will write $f_{a^k}$ for the partial derivative of $f$ with respect to the $k^{\text{th}}$ component of $\bm a=(a^1,\dots, a^n)$. The gradient of $f$ in direction $\bm x=(x^1,\dots, x^d)$ will be denoted by 
$$\nabla_{\bm x}f=\Big(\frac{\partial f}{\partial x^1},\dots, \frac{\partial f}{\partial x^d}\Big),$$
 and we will write $f_{x^kx^m}$ for the  second partial derivatives with respect to the components $x^k$ and $x^m$ of the vector $\bm x$. The Euclidean inner product of  two vectors $\bm x$ and $\bm y$ will be denoted by $\bm x\cdot\bm y$.

\medskip 

\begin{theorem}[F\"ollmer \cite{FoellmerIto}]\label{FoellmerThm2}  Suppose that the continuous trajectory $\bm X:[0,\infty)\to\bR^d$ admits  for all $k$ and $m$ the continuous covariation $[X^k,X^m]$  along $(\bT_N)$, that $\bm A:[0,\infty)\to\bR^n$ is a continuous function whose components are locally of finite variation, and that $f\in C^{1,2}(\bR^n\times\bR^d)$. Then 
\begin{eqnarray*}f(\bm A_t,\bm X_t)-f(\bm A_0,\bm X_0)&=&\int_0^t\nabla_{\bm x}f(\bm A_s,\bm X_s)\,d\bm X_s+\sum_{k=1}^n\int_0^tf_{a^k}(\bm A_s,\bm X_s)\,dA_s^k\\&&+\ \frac12\sum_{k,m=1}^d\int_0^tf_{x^kx^m}(\bm A_s,\bm X_s)\,d[X^k,X^m]_s,
\end{eqnarray*}
where $\int_0^tf_{a^k}(\bm A_s,\bm X_s)\,dA_s^k$ and $\int_0^tf_{x^kx^m}(\bm A_s,\bm X_s)\,d[X^k,X^m]_s$ are taken in the usual sense of  Stieltjes integrals and the \emph{pathwise It\^o integral} is given by the following limit of nonanticipative Riemann sums:
\begin{equation}\label{pathwise Ito integral d}
\int_0^t\nabla_{\bm x}f(\bm A_s,\bm X_s)\,d\bm X_s=\lim_{N\ua\infty}\sum_{\stackrel{t_i, t_{i+1}\in\bT_N}{t_{i+1}\leq t}}\nabla_{\bm x}f(\bm A_{t_i},\bm X_{t_i})\cdot(\bm X_{t_{i+1}}-\bm X_{t_i}).
\end{equation}
\end{theorem}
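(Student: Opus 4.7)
My plan is to verify the explicit formula \eqref{Ct eq} by writing $C_t=(1-\alpha)V_0\,e^{Y_t}$ with
\[
Y_t:=\int_0^t\frac{m_s}{S_s}\,dS_s-\frac12\int_0^t\frac{m_s^2}{S_s^2}\,d[S]_s+\int_0^t(1-m_s)r_s\,ds,
\]
applying the pathwise It\^o formula to $e^{Y_t}$, and reading off that $dC_t=\frac{m_tC_t}{S_t}\,dS_t+(1-m_t)C_tr_t\,dt$. Once this holds, the self-financing identity $dV_t=\xi_t\,dS_t+\eta_t\,dB_t$ is purely algebraic: $dV_t=dC_t+\alpha V_0B_tr_t\,dt$ and the $dt$-coefficient rearranges to $(V_t-m_tC_t)r_t=\eta_tB_tr_t$ by the definition of $\eta$ in \eqref{xi eta}. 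The no-gap-risk conclusion $C_t\ge0$ is manifest from the exponential representation.

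\noindent\textbf{Execution.} First, the admissibility of $m_t/S_t$: if $m_t=g(\bm A_t,S_t)$ on $[0,T]$ with $g\in C^1(\R^{n+1})$, continuity and positivity of $S$ confine $S_t$ to a compact subinterval $[\delta,M]\subset(0,\infty)$ on $[0,T]$, so multiplying by a $C^1$ cutoff that agrees with $1/x$ on $[\delta,M]$ yields $\tilde g\in C^1(\R^{n+1})$ with $m_t/S_t=\tilde g(\bm A_t,S_t)$. All three integrals defining $Y_t$ therefore exist. By the associativity of the pathwise It\^o integral (Theorem~\ref{associativity thm}), $\bigl[\int_0^{\cdot}\tfrac{m_s}{S_s}\,dS_s\bigr]_t=\int_0^t(m_s/S_s)^2\,d[S]_s$, and since the remaining contributions to $Y_t$ are continuous and of finite variation, Remark~\ref{FV remark} gives $[Y]_t=\int_0^t(m_s/S_s)^2\,d[S]_s$. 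Theorem~\ref{FoellmerThm} applied to $f(y)=e^y$ then produces
\[
e^{Y_t}-1=\int_0^t e^{Y_s}\,dY_s+\tfrac12\int_0^t e^{Y_s}\,d[Y]_s.
\]
Splitting the first integral linearly into its three constituents and using associativity once more to rewrite $\int_0^t e^{Y_s}\tfrac{m_s}{S_s}\,dS_s$ as a pathwise It\^o integral against $S$, the $\tfrac12\int e^{Y_s}\,d[Y]_s$ contribution cancels exactly the $-\tfrac12(m_s/S_s)^2\,d[S]_s$ piece in $dY$, leaving the claimed expression for $dC_t$.

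\noindent\textbf{Uniqueness and main obstacle.} Any alternative self-financing pair satisfying \eqref{cushion DPPI}--\eqref{xi eta} produces a cushion $\tilde C$ obeying the same pathwise linear It\^o equation $d\tilde C_t=\tfrac{m_t\tilde C_t}{S_t}\,dS_t+(1-m_t)\tilde C_tr_t\,dt$ with initial condition $(1-\alpha)V_0$; applying Theorem~\ref{FoellmerThm} to $\log\tilde C_t$ (the degenerate case $(1-\alpha)V_0=0$ being trivial, and positivity of $\tilde C$ propagating from the initial condition along the linear equation otherwise) together with associativity to compute $[\tilde C]$ in terms of $[S]$ inverts the exponential to recover \eqref{Ct eq}, hence $\tilde C\equiv C$. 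The technical crux throughout is the proper use of associativity (Theorem~\ref{associativity thm}): without the two identities $\bigl[\int\xi\,dS\bigr]=\int\xi^2\,d[S]$ and $\int\eta\,d\bigl(\int\xi\,dS\bigr)=\int\eta\xi\,dS$, one cannot collapse the It\^o expansion of $e^{Y_t}$ back into a single integral against $S$, and the self-financing calculation could not be closed within the purely pathwise framework.
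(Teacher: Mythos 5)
Your proposal does not prove the statement in question. The statement is Theorem~\ref{FoellmerThm2}, the multidimensional pathwise It\^o formula: for $f\in C^{1,2}(\bR^n\times\bR^d)$ and an integrator $\bm X$ admitting all covariations $[X^k,X^m]$, the increment $f(\bm A_t,\bm X_t)-f(\bm A_0,\bm X_0)$ decomposes into the pathwise It\^o integral \eqref{pathwise Ito integral d}, the Stieltjes integrals against $dA^k$, and the second-order covariation terms. What you have written instead is a proof of Theorem~\ref{main thm} (the DPPI theorem): you verify the cushion formula \eqref{Ct eq}, the self-financing identity, the no-gap-risk bound, and uniqueness. Nowhere do you address the convergence of the Riemann sums in \eqref{pathwise Ito integral d} or derive the claimed decomposition for a general $f\in C^{1,2}(\bR^n\times\bR^d)$. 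Worse, your argument is circular relative to the target: it invokes Theorem~\ref{FoellmerThm} (the $n$-dimensional-parameter, one-dimensional-integrator special case of the very formula to be proved) and Theorem~\ref{associativity thm}, whose proof in the paper itself rests on the same Taylor-expansion machinery that underlies Theorem~\ref{FoellmerThm2}. So the proposal assumes the statement it was asked to establish.

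For comparison, the paper's proof of Theorem~\ref{FoellmerThm2} is two steps. First, for $f\in C^2(\bR^{n+d})$ one applies F\"ollmer's original multidimensional result (Remarque 1 in \cite{FoellmerIto}) to the combined path $(\bm A,\bm X)$, observing via Remark~\ref{FV remark} that all quadratic variations and covariations involving the finite-variation components --- $[A^k]$, $[A^k,A^\ell]$, and $[A^k,X^i]$ --- vanish identically, which kills the second-order terms in the $\bm a$-directions and leaves exactly the asserted formula. Second, the relaxation from $C^2$ to $C^{1,2}$ (only once continuously differentiable in $\bm a$) is obtained by the Taylor argument spelled out in the proof of Theorem~\ref{associativity thm}: expand increments to first order in $\bm a$ and second order in $\bm x$, bound the first-order remainder $\bm\delta_i^\ell$ against the total variation of $\bm A$, bound the second-order remainder $\bm\eps_i^\ell$ against the convergent quadratic-variation sums, and identify the second-order limit via the vague convergence of the point measures $\sum_i(X^k_{t_{i+1}}-X^k_{t_i})^2\delta_{t_i}$ toward $d[X^k]$ together with polarization. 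None of these ingredients appears in your proposal; you would need to restart from this outline to prove the stated theorem.
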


\begin{proof}For $f\in C^2(\bR^{n+d})$ the result follows from Remarque 1 in \cite{FoellmerIto} and by noting that the quadratic variations $[A^k]$ and covariations $[A^k,A^\ell]$ and $[A^k,X^i]$ ($k,\ell=1,\dots,n$, $i=1\dots, d$) vanish identically according to Remark~\ref{FV remark}. The extension to $f\in C^{1,2}(\bR^n\times\bR^d)$ is obtained just as in the proof of our Theorem~\ref{associativity thm} below by using Taylor development of $f(\bm a,\bm x)$ up to first order in $\bm a$ and up to second order in $\bm x$. 
\end{proof}

\begin{remark}In \eqref{pathwise Ito integral d} it is typically \emph{not} possible to write 
$$\int_0^t\nabla_{\bm x}f(\bm A_s,\bm X_s)\,d\bm X_s=\sum_{i=1}^d\int_0^tf_{x^i}(\bm A_s,\bm X_s)\,d X^i_s,
$$
because the  integrals $\int_0^tf_{x^i}(\bm A_s,\bm X_s)\,d X^i_s$ on the right-hand side need not exist individually as the limits of nonanticipative Riemann sums. 
\end{remark}

\medskip

Theorem~\ref{FoellmerThm2} implies in particular that the pathwise It\^o integral $\int_0^t\bm\xi_s\,d\bm X_s$ can be defined via 
\eqref{pathwise Ito integral d} when the integrand $\bm \xi$ is of the form $\bm \xi_t=\nabla_{\bm x}f(\bm A_s,\bm X_s)$ for some continuous function $\bm A:[0,\infty)\to\bR^n$  whose components are locally of finite variation and for $f\in C^{1,2}(\bR^n\times\bR^d)$. Since in the case $d>1$ not every $C^1$-function $\bm g:\bR^n\times\bR^d\to\bR^d$ is of the form $\bm g=\nabla_{\bm x}f$ for some $f\in C^{1,2}(\bR^n\times\bR^d)$, the following definition of $d$-dimensional admissible integrands needs to be slightly more complicated than its one-dimensional counterpart,
Definition~\ref{AdmissibleIntegrandd=1Def}.

\medskip

\begin{definition}Suppose that the continuous trajectory $\bm X:[0,\infty)\to\bR^d$ admits the continuous covariations $[X^k,X^m]$ along $(\bT_N)$, $k,m=1,\dots,d$.  
A function $t\mapsto\bm\xi_t\in\bR^d$ is called an \emph{admissible integrand for $\bm X$} if for each $T>0$ there exists $n\in\bN$, a function $f\in C^{1,2}(\bR^{n}\times\bR^d)$, and a continuous function $\bm A:[0,\infty)\to\bR^n$  whose components are of finite variation on $[0,T]$ such that $\bm\xi_t=\nabla_{\bm x}f(\bm A_t,X_t)$ for $0\le t\le T$.
\end{definition}

\medskip

The following result is a straightforward extension of \cite[Proposition 2.3.3]{Sondermann}, and its proof is left to the reader.

\medskip

\begin{proposition}\label{covariation Prop}Suppose that  $\bm X$ is as in Theorem~\ref{FoellmerThm2}, that  $\bm\xi^{(1)},\dots,\bm\xi^{(\nu)}$ are admissible integrands for $\bm X$, and that
$$\displaystyle Y^\ell_t:=\int_0^t\bm\xi^{(\ell)}_s\,d\bm X_s,\qquad \ell=1,\dots,\nu.$$ Then $\bm Y_t=(Y^1_t,\dots,Y^\nu_t)$  is a continuous trajectory that admits the continuous covariations
$$[Y^k,Y^\ell]_t=\sum_{i,j=1}^d\int_0^t\xi^{(k),i}_s\xi^{(\ell),j}\,d[X^i,X^j]_s,\qquad k,\ell=1,\dots,\nu.
$$
\end{proposition}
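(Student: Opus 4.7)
My plan is to mimic the argument of the one-dimensional case \cite[Prop.~2.3.3]{Sondermann}, adapted to the multidimensional integrator and multiple integrands. Continuity of each $Y^\ell$ is immediate from Theorem~\ref{FoellmerThm2}: writing $\bm\xi^{(\ell)}_t=\nabla_{\bm x}f_\ell(\bm A_t,\bm X_t)$ with $f_\ell\in C^{1,2}$, the pathwise Itô formula expresses $Y^\ell_t$ as $f_\ell(\bm A_t,\bm X_t)-f_\ell(\bm A_0,\bm X_0)$ minus Stieltjes integrals against $\bm A$ and against each $[X^i,X^j]$, all of which are continuous in $t$.

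For the covariation formula, I would first reduce to the diagonal case $k=\ell$ by polarization:
\begin{equation*}
[Y^k,Y^\ell]_t=\tfrac12\bigl([Y^k+Y^\ell]_t-[Y^k]_t-[Y^\ell]_t\bigr).
\end{equation*}
Since $\bm\xi^{(k)}+\bm\xi^{(\ell)}$ is again admissible (its generating function is $f_k+f_\ell$ after enlarging $\bm A$ to carry both auxiliary paths) and the claimed right-hand side is bilinear in the integrands, it suffices to establish
\begin{equation*}
[Y^\ell]_t=\sum_{i,j=1}^d\int_0^t\xi^{(\ell),i}_s\xi^{(\ell),j}_s\,d[X^i,X^j]_s
\end{equation*}
for a single admissible integrand $\bm\xi=\nabla_{\bm x}f(\bm A,\bm X)$. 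On each subinterval $[t_i,t_{i+1}]$ of $\bT_N$, applying Theorem~\ref{FoellmerThm2} together with a second-order Taylor expansion of $f$ in $\bm x$ yields the decomposition
\begin{equation*}
Y_{t_{i+1}}-Y_{t_i}=\bm\xi_{t_i}\cdot(\bm X_{t_{i+1}}-\bm X_{t_i})+r^N_i,
\end{equation*}
where $r^N_i$ collects the Taylor remainder in $\bm x$, the first-order contribution in $\bm a$, and the Itô correction $-\tfrac12\sum_{p,q}\int_{t_i}^{t_{i+1}}f_{x^px^q}(\bm A_s,\bm X_s)\,d[X^p,X^q]_s$. Squaring and summing, the leading term
\begin{equation*}
\sum_{t_{i+1}\le t}\sum_{p,q=1}^d\xi^p_{t_i}\xi^q_{t_i}(X^p_{t_{i+1}}-X^p_{t_i})(X^q_{t_{i+1}}-X^q_{t_i})
\end{equation*}
converges to $\sum_{p,q}\int_0^t\xi^p_s\xi^q_s\,d[X^p,X^q]_s$ by continuity of $\bm\xi$ combined with the standard pathwise Riemann-sum convergence for covariations, itself obtained via polarization from \eqref{covariation eq}.

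The main obstacle will be showing that the cross- and square-error sums $\sum 2\bigl(\bm\xi_{t_i}\cdot(\bm X_{t_{i+1}}-\bm X_{t_i})\bigr)r^N_i+\sum(r^N_i)^2$ vanish as $N\to\infty$. Each piece of $r^N_i$ is bounded by an oscillation factor (the modulus of continuity on the subinterval of $\bm X$, of $\bm A$, or of a derivative $f_{x^px^q}(\bm A,\bm X)$; each tends to zero uniformly with the mesh of $\bT_N$) times one of the quantities $|\bm X_{t_{i+1}}-\bm X_{t_i}|^2$, $|\bm A_{t_{i+1}}-\bm A_{t_i}|$, or the increment of $[X^p,X^q]$ on that subinterval. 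Since $\sum|\bm X_{t_{i+1}}-\bm X_{t_i}|^2$ remains bounded (converging to $\sum_p[X^p]_t$) and since $\bm A$ and each $[X^p,X^q]$ have finite total variation on $[0,T]$, a Cauchy--Schwarz estimate combined with these oscillation bounds forces both error sums to zero, completing the proof.
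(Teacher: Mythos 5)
The paper gives no proof of Proposition~\ref{covariation Prop}; it is explicitly left to the reader as a ``straightforward extension'' of \cite[Proposition 2.3.3]{Sondermann}, and your sketch is exactly the intended argument: polarization plus bilinearity to reduce to $[Y^\ell]$, Taylor expansion of the increment, convergence of the leading Riemann sum via the vague-convergence (portmanteau) argument that the paper itself uses later in the proof of Theorem~\ref{associativity thm}, and Cauchy--Schwarz-type bounds for the error sums. So the route is correct. Two small comments. First, your description of $r^N_i$ is slightly imprecise: the second-order Taylor term and the It\^o correction $-\tfrac12\sum_{p,q}\int_{t_i}^{t_{i+1}}f_{x^px^q}\,d[X^p,X^q]$ are \emph{not} individually ``oscillation times an increment''; they are genuinely of order $|\bm X_{t_{i+1}}-\bm X_{t_i}|^2$ and of order the variation increment of $[X^p,X^q]$, respectively. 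This does no harm --- the cross and square sums still vanish because $\max_i|\bm X_{t_{i+1}}-\bm X_{t_i}|\to0$ by uniform continuity and because the variation functions of $\bm A$ and of $[X^p,X^q]$ are continuous, so their maximal increments over the grid also tend to zero --- but the estimate you actually need is the max-factor bound, not an oscillation bound. Second, there is a shortcut that removes this bookkeeping entirely: by Theorem~\ref{FoellmerThm2}, $Y^\ell_t=f_\ell(\bm A_t,\bm X_t)-f_\ell(\bm A_0,\bm X_0)-D^\ell_t$ with $D^\ell$ continuous and of finite variation, so Remark~\ref{FV remark} gives $[Y^k,Y^\ell]=[f_k(\bm A,\bm X),f_\ell(\bm A,\bm X)]$; one then only has to compute the covariation of the composed paths directly from the first-order Taylor expansion, and the It\^o correction never enters the remainder at all.
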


\medskip

The preceding proposition implies in particular that $\bm Y_t=(Y^1_t,\dots,Y^\nu_t)$ is again an admissible integrator for pathwise It\^o calculus. The following \emph{associativity rule} for the pathwise It\^o integral  shows that one can express a pathwise It\^o integral with respect to $\bm Y$ as a pathwise It\^o integral with respect to $\bm X$. 

\medskip

\begin{theorem}[Associativity of the pathwise It\^o integral]\label{associativity thm}Suppose that  $\bm X$,  $\bm\xi^{(1)},\dots,\bm\xi^{(\nu)}$ and $\bm Y$ are as in Proposition~\ref{covariation Prop}, and let $\bm\eta=(\eta^1,\dots,\eta^\nu)$ be an admissible integrand for $\bm Y$. Then $\sum_{\ell=1}^\nu\eta^\ell\bm\xi^{(\ell)}$ is an admissible integrand for $\bm X$ and
$$\int_0^t\bm\eta_s\,d\bm Y_s=\int_0^t\sum_{\ell=1}^\nu\eta_s^\ell\bm\xi^{(\ell)}_s\,d\bm X_s.
$$
\end{theorem}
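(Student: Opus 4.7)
The plan is to find a single $C^{1,2}$ function $H$ whose pathwise It\^o expansion against $\bm X$ produces the right-hand integral $\int_0^t\sum_\ell\eta^\ell_s\bm\xi^{(\ell)}_s\,d\bm X_s$, and whose value $H(\bm\Lambda_t,\bm X_t)$ coincides with $h(\bm B_t,\bm Y_t)$, whose expansion against $\bm Y$ produces the left-hand integral $\int_0^t\bm\eta_s\,d\bm Y_s$. Equating the two expansions then forces the two pathwise It\^o integrals to agree.

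Fix $T>0$. By admissibility, write $\bm\eta_t=\nabla_{\bm y}h(\bm B_t,\bm Y_t)$ with $h\in C^{1,2}(\bR^p\times\bR^\nu)$ and $\bm B:[0,T]\to\bR^p$ continuous and of finite variation, and $\bm\xi^{(\ell)}_t=\nabla_{\bm x}f^{(\ell)}(\bm A^{(\ell)}_t,\bm X_t)$ with $f^{(\ell)}\in C^{1,2}(\bR^{n_\ell}\times\bR^d)$. Applying Theorem~\ref{FoellmerThm2} to $f^{(\ell)}(\bm A^{(\ell)}_t,\bm X_t)$ and using $Y^\ell_t=\int_0^t\bm\xi^{(\ell)}_s\,d\bm X_s$ gives
$$Y^\ell_t=f^{(\ell)}(\bm A^{(\ell)}_t,\bm X_t)-f^{(\ell)}(\bm A^{(\ell)}_0,\bm X_0)-\mathcal A^\ell_t,$$
where $\mathcal A^\ell_t$ collects the Riemann--Stieltjes integrals against $dA^{(\ell),k}$ and against $d[X^i,X^j]$, and is continuous and of finite variation. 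Let $\bm\Lambda_t$ be the continuous BV path obtained by concatenating $\bm B_t$, all $\bm A^{(\ell)}_t$, and all $\mathcal A^\ell_t$, and define
$$H(\bm\lambda,\bm x):=h\bigl(\bm b,\bigl\{f^{(\ell)}(\bm a^{(\ell)},\bm x)-f^{(\ell)}(\bm A^{(\ell)}_0,\bm X_0)-\mathcal a^\ell\bigr\}_{\ell=1}^\nu\bigr).$$
Then $H\in C^{1,2}(\bR^N\times\bR^d)$ by the chain rule and $H(\bm\Lambda_t,\bm X_t)=h(\bm B_t,\bm Y_t)$ by construction. A direct computation yields
$$\nabla_{\bm x}H(\bm\Lambda_t,\bm X_t)=\sum_{\ell=1}^\nu h_{y^\ell}(\bm B_t,\bm Y_t)\,\nabla_{\bm x}f^{(\ell)}(\bm A^{(\ell)}_t,\bm X_t)=\sum_{\ell=1}^\nu\eta_t^\ell\bm\xi^{(\ell)}_t,$$
which simultaneously shows that $\sum_\ell\eta^\ell\bm\xi^{(\ell)}$ is an admissible integrand for $\bm X$.

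Now apply Theorem~\ref{FoellmerThm2} twice on $[0,t]$: once to $H(\bm\Lambda,\bm X)$ with integrator $\bm X$, and once to $h(\bm B,\bm Y)$ with integrator $\bm Y$, where Proposition~\ref{covariation Prop} supplies the required covariations $[Y^k,Y^\ell]$. Both left-hand sides are equal to $h(\bm B_t,\bm Y_t)-h(\bm B_0,\bm Y_0)$, so the desired identity reduces to a cancellation between the finite-variation and covariation correction terms on the two sides. This bookkeeping is the main obstacle, but it splits into three elementary observations. First, the $dB^k$-contributions coincide trivially. Second, by the very definition of $\mathcal A^\ell$, the combined $dA^{(\ell),k}$- and $-d\mathcal A^\ell$-contributions on the $\bm X$-side telescope into $-\tfrac12\sum_\ell\sum_{i,j}\int h_{y^\ell}f^{(\ell)}_{x^ix^j}\,d[X^i,X^j]$, which cancels the pure-Hessian piece $\sum_\ell h_{y^\ell}f^{(\ell)}_{x^ix^j}$ of $H_{x^ix^j}$ appearing in the $d[X^i,X^j]$-term on the $\bm X$-side. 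Third, the remaining cross-Hessian piece $\sum_{\ell,m}h_{y^\ell y^m}\xi^{(\ell),i}\xi^{(m),j}$ of $H_{x^ix^j}$, integrated against $d[X^i,X^j]$, matches $\tfrac12\sum_{k,m}\int h_{y^ky^m}\,d[Y^k,Y^m]$ on the $\bm Y$-side by the explicit formula for $[Y^k,Y^m]$ in Proposition~\ref{covariation Prop}. The mild extension of Theorem~\ref{FoellmerThm2} from $C^2$ to $C^{1,2}$ data needed for $H$ is handled exactly as indicated in that theorem's proof, by Taylor expanding separately to first order in $\bm\lambda$ and second order in $\bm x$.
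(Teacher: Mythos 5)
Your proof is correct, and while the first half (the construction of $F^\ell$, the augmented finite-variation path, the composite function $H$, and the verification that $\nabla_{\bm x}H(\bm\Lambda_t,\bm X_t)=\sum_\ell\eta^\ell_t\bm\xi^{(\ell)}_t$, which gives admissibility) coincides with the paper's argument, your derivation of the integral identity takes a genuinely different route. The paper proves the identity by brute force: it writes $\int_0^t\bm\eta_s\,d\bm Y_s$ as the limit of the Riemann sums $\sum_i\bm\eta_{t_i}\cdot\big(\bm F(\bm A_{t_{i+1}},\bm X_{t_{i+1}})-\bm F(\bm A_{t_i},\bm X_{t_i})\big)$ and Taylor-expands each increment (first order in $\bm a$, second order in $\bm x$) \emph{with the weights $\eta^\ell_{t_i}$ inserted}, then identifies each limiting piece (error terms vanish, the $\nabla_{\bm a}F^\ell$ terms produce $-\tfrac12\sum\int\eta^\ell f^\ell_{x^kx^m}\,d[X^k,X^m]$ via the definition of $A^{(\ell),n_\ell+1}$, the Hessian terms produce the opposite via vague convergence of the point measures). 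You instead apply Theorem~\ref{FoellmerThm2} twice as a black box --- once to $H(\bm\Lambda,\bm X)$ against $\bm X$ and once to $h(\bm B,\bm Y)$ against $\bm Y$, the latter legitimized by Proposition~\ref{covariation Prop} --- and match the correction terms; your three cancellations check out (the telescoping in your second point uses the associativity of the Stieltjes integral, which you should cite explicitly, as the paper does with Widder's Theorem I.6b). Your route is more modular and avoids redoing the weighted Taylor expansion, at the cost of leaning on the $C^{1,2}$ version of the pathwise It\^o formula, which the paper only establishes by reference to the very Taylor technique displayed inside its proof of this theorem; since that extension is logically independent of associativity, there is no circularity, but it does mean your ``elementary bookkeeping'' quietly outsources the hard analytic step (convergence of the weighted Riemann sums) to two invocations of Theorem~\ref{FoellmerThm2} rather than carrying it out once directly.
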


\begin{proof}[Proof of Theorem~\ref{associativity thm}] We fix $T\ge0$. For $t\le T$,  let $\bm\xi^{(\ell)}$ be of the form $\bm\xi^{(\ell)}_t=\nabla_{\bm x}f^\ell(\bm A^{(\ell)}_t,\bm X_t)$ for $n_\ell \in\bN$, continuous $\bm A^{(\ell)}:[0,T]\to\bR^{n_\ell }$ with components of finite variation,  and $f^\ell\in C^{1,2}(\bR^{n_\ell }\times\bR^d)$. We also define
\begin{equation}\label{Ain+1ell}
A^{(\ell),n_\ell +1}_t:=\sum_{k=1}^{n_\ell }\int_0^tf^\ell _{a^k}(\bm A^{(\ell)}_s,\bm X_s)\,dA_s^k+ \frac12\sum_{k,m=1}^d\int_0^tf^\ell _{x^kx^m}(\bm A^{(\ell)}_s,\bm X_s)\,d[X^k,X^m]_s.
\end{equation}
Then $A^{(\ell),n_\ell +1}$ is continuous and of finite variation on $[0,T]$ by standard properties of Stieltjes integrals (see   \cite[Theorem I.5c]{Widder}).
Moreover, the pathwise It\^o formula from Theorem~\ref{FoellmerThm} implies that 
\begin{equation}\label{Y wt f eq}
Y^\ell _t=f^\ell (\bm A^{(\ell)}_t,\bm X_t)-f^\ell(\bm A^{(\ell)}_0,\bm X_0)-A^{(\ell),n_\ell +1}_t=F^\ell (\wt {\bm A}^{(\ell)}_t,\bm X_t)
\end{equation}
where
$$\wt{\bm  A}^{(\ell)}_t:=(A^{(\ell),1}_t,\dots,A^{(\ell),n_\ell }_t,A^{(\ell),n_\ell +1}_t)$$
 and 
$$F^\ell (\wt{\bm a},\bm x):=f^\ell (\bm a,\bm x)-f^\ell (\bm A_0,\bm X_0)-a^{n_\ell +1}\qquad\text{for $\wt {\bm a}=(\bm a,a^{n_\ell +1})\in\bR^{n_\ell }\times\bR$.}$$
  Clearly, $\wt {\bm A}^{(\ell)}:[0,T]\to\bR^{n_\ell +1}$ is continuous and has  finite total variation on $[0,T]$, and $F^\ell $ belongs to $C^{1,2}(\bR^{n_\ell +1}\times\bR^d)$. Moreover,
\begin{equation}\label{nabla Fell id}
\nabla_{\bm x}F^\ell( \wt{\bm a},\bm x) =\nabla_{\bm x}f^\ell(\bm a,\bm x)\qquad\text{for $\wt {\bm a}=(\bm a,a^{n_\ell +1})\in\bR^{n_\ell }\times\bR$.}
\end{equation}
Let us denote 
$${\bm F}(\bm a,\bm x):=\big(F^{1}(\wt{\bm a}^{({1})},\bm x),\dots,F^{\nu}(\wt{\bm a}^{(\nu)},\bm x)\big)\qquad\text{for }\bm a=(\wt{\bm a}^{(1)},\dots,\wt{\bm a}^{(\nu)})\in\bR^{n_1+\cdots+n_\nu+\nu}.$$
By writing $ {\bm A}_t:=(\wt{\bm A}^{(1)},\dots, \wt{\bm A}^{(\nu)})$, the identity \eqref{Y wt f eq} becomes
\begin{equation}\label{Y wt f eq 2} 
\bm Y_t={\bm F}( {\bm A}_t,\bm X_t).
\end{equation}

Since $\bm\eta$ is an admissible integrand for $\bm Y$, there are $m\in\bN$, $h\in C^{1,2}(\bR^{m}\times\bR^\nu)$, and continuous $\bm D:[0,T]\to\bR^m$ with  finite variation such that $\bm\eta_t=\nabla_{\bm y}h(\bm D_t,\bm Y_t)$ for $0\le t\le T$. 
Using \eqref{Y wt f eq 2},  \eqref{nabla Fell id}, and the notation $\nabla_{\bm x}{\bm F}( {\bm a},\bm x)$ for the Jacobi matrix of $\bm x\mapsto {\bm F}( {\bm a},\bm x)$, we get
\begin{eqnarray*}\sum_{\ell=1}^\nu\eta_t^\ell\bm\xi_t^{(\ell)}&=&\sum_{\ell=1}^\nu h_{y^\ell}(\bm D_t,\bm Y_t)\nabla_{\bm x}f^{{\ell}}(\bm A^{(\ell)}_t,\bm X_t)=\nabla_{\bm y}h(\bm D_t,{\bm F}( {\bm A}_t,\bm X_t))\cdot\nabla_{\bm x}{\bm F}( {\bm A}_t,\bm X_t)\\
&=&\nabla_{\bm x}\wt h(\wt{\bm D}_t,\bm X_t),
\end{eqnarray*}
where $\wt{\bm  D}_t=(\bm D_t, {\bm  A}_t)$, and $\wt h((\bm d, {\bm a}),\bm x):=h(\bm d,\bm F( {\bm a},\bm x))$ belongs to $C^{1,2}(\bR^{K}\times\bR^d)$ for $K=m+n_1+\cdots+n_\nu+\nu$.  It follows in particular that  $\sum_{i=1}^\nu\eta^\ell\bm\xi^{(\ell)}$ is an admissible integrand for $\bm X$. 

 The definition \eqref{pathwise Ito integral d} of the It\^o integral and \eqref{Y wt f eq 2} imply that 
\begin{equation}\label{associativity limit}
\begin{split}
\int_0^t\bm\eta_s\,d\bm Y_s&=\lim_{N\ua\infty}\sum_{\stackrel{t_i, t_{i+1}\in\bT_N}{t_{i+1}\leq t}}\bm\eta_{t_i}\cdot\left({\bm F}({\bm A}_{t_{i+1}},\bm X_{t_{i+1}})-{\bm F}({\bm A}_{t_i},\bm X_{t_i})\right)\\
&=\lim_{N\ua\infty}\sum_{\stackrel{t_i, t_{i+1}\in\bT_N}{t_{i+1}\leq t}}\sum_{\ell=1}^\nu\eta^\ell_{t_i}\left( F^\ell({\bm A}_{t_{i+1}},\bm X_{t_{i+1}})-F^\ell({\bm A}_{t_i},\bm X_{t_i})\right),
\end{split}
\end{equation}
where, by abuse of notation, we write $F^\ell({\bm A}_{t_i},\bm X_{t_i})$ instead of  $F^\ell(\wt{\bm A}^{(\ell)}_{t_i},\bm X_{t_i})$.
 Using  multidimensional  Taylor development up to first order in ${\bm a}$ and up to second order in $\bm x$, we get
\begin{eqnarray*}
\lefteqn{ F^\ell({\bm A}_{t_{i+1}},\bm X_{t_{i+1}})-F^\ell({\bm A}_{t_i},\bm X_{t_i})}\\
&=&F^\ell({\bm A}_{t_{i+1}},\bm X_{t_{i+1}})-F^\ell({\bm A}_{t_{i}},\bm X_{t_{i+1}})+F^\ell({\bm A}_{t_{i}},\bm X_{t_{i+1}})-F^\ell({\bm A}_{t_{i}},\bm X_{t_{i}})\\
&=&\nabla_{ \bm a}F^\ell({\bm A}_{t_i},\bm X_{t_{i}})\cdot ({\bm A}_{t_{i+1}}-{\bm A}_{t_i})+\bm \delta_i^\ell\cdot ({\bm A}_{t_{i+1}}-{\bm A}_{t_i})\\
&&+\nabla_{\bm x}F^\ell({\bm A}_{t_i},\bm X_{t_i})\cdot(\bm X_{t_{i+1}}-\bm X_{t_i})\\
&&+\frac12(\bm X_{t_{i+1}}-\bm X_{t_i})\cdot \nabla_{\bm x}^2F^\ell({\bm A}_{t_i},\bm X_{t_i})(\bm X_{t_{i+1}}-\bm X_{t_i})+(\bm X_{t_{i+1}}-\bm X_{t_i})\cdot\bm\eps_{i}^{\ell}(\bm X_{t_{i+1}}-\bm X_{t_i}),
\end{eqnarray*}
where 
$$\bm\delta^\ell_i=\int_0^1\nabla_{\bm a}F^\ell\big({\bm A}_{t_i}+s({\bm A}_{t_{i+1}}-{\bm A}_{t_{i}}),\bm X_{t_{i+1}}\big)\,ds-\nabla_{\bm a}F^\ell({\bm A}_{t_i},\bm X_{t_{i}}),
$$
$\nabla_{\bm x}^2F^\ell$ is the Hessian of $F^\ell$ with respect to $\bm x$,
and, for some $\theta\in[0,1]$,
$$\bm\eps_i^\ell=\frac12\Big(\nabla_{\bm x}^2F^\ell\big({\bm A}_{t_i},\bm X_{t_i}+\theta(\bm X_{t_{i+1}}-\bm X_{t_i})\big)-\nabla_{\bm x}^2F^\ell({\bm A}_{t_i},\bm X_{t_i})\Big).
$$
The continuity of $\nabla_{ \bm a}F^\ell$ and $\nabla_{\bm x}^2F^\ell$ implies that 
$$\max_{\stackrel{t_i, t_{i+1}\in\bT_N}{t_{i+1}\leq T}}\big(|\bm\delta^\ell_i|+\|\bm\eps^\ell_i\|\big)\longrightarrow0\qquad\text{as $N\ua\infty$,}
$$
where $|\cdot|$ denotes the Euclidean norm and $\|\bm\eps^\ell_i\|^2:=\max_{|\bm x|=1}\bm x\cdot \bm\eps^\ell_i\bm x$. When denoting the total variation of $A^k$ over the interval $[0,T]$ by $\|A^k\|_{\text{var}}$, we thus get
\begin{eqnarray*}\Big|\sum_{\stackrel{t_i, t_{i+1}\in\bT_N}{t_{i+1}\leq t}}\bm \delta_i^\ell\cdot ({\bm A}_{t_{i+1}}-{\bm A}_{t_i})\Big|\le \max_{\stackrel{t_i, t_{i+1}\in\bT_N}{t_{i+1}\leq T}}|\bm\delta^\ell_i|\, \sum_{k=1}^K\|A^k\|_{\text{var}}\longrightarrow 0,
\end{eqnarray*}
as $N\ua\infty$. Furthermore, 
\begin{eqnarray*}
\Big|\sum_{\stackrel{t_i, t_{i+1}\in\bT_N}{t_{i+1}\leq t}}(\bm X_{t_{i+1}}-\bm X_{t_i})\cdot\bm\eps_{i}^{\ell}(\bm X_{t_{i+1}}-\bm X_{t_i})\Big|\le \max_{\stackrel{t_i, t_{i+1}\in\bT_N}{t_{i+1}\leq T}}\|\bm\eps^\ell_i\|^2\sum_{\stackrel{t_i, t_{i+1}\in\bT_N}{t_{i+1}\leq t}}(\bm X_{t_{i+1}}-\bm X_{t_i})\cdot(\bm X_{t_{i+1}}-\bm X_{t_i}).
\end{eqnarray*}
Since the rightmost sum converges to the finite limit $[X^1]_t+\cdots+[X^d]_t$,  the right-hand side above tends to zero  as $N\ua\infty$.

Next, the standard existence result for Stieltjes integrals (e.g.,   \cite[Theorem I.4a]{Widder}) implies that 
\begin{eqnarray*}
\lefteqn{\lim_{N\ua\infty}\sum_{\stackrel{t_i, t_{i+1}\in\bT_N}{t_{i+1}\leq t}}\eta^\ell_{t_i}\nabla_{ \bm a}F^\ell({\bm A}_{t_i},\bm X_{t_{i}})\cdot ({\bm A}_{t_{i+1}}-{\bm A}_{t_i})}\\
&=&\sum_{k=1}^K\int_0^t\eta^\ell_sF^\ell_{a^k}({\bm A}_s,\bm X_s)\,dA^k_s\\
&=&\sum_{k=1}^{n_\ell}\int_0^t\eta^\ell_sf^\ell_{a^k}(\bm A^{(\ell)}_s,\bm X_s)\,dA_s^{(\ell),k}-\int_0^t\eta^\ell_s\,dA^{(\ell),n_\ell+1}_s\\
&=&-\frac12\sum_{k,m=1}^d\int_0^tf^\ell_{x^kx^m}(\bm A^{(\ell)}_s,\bm X_s)\,d[X^k,X^m]_s,
\end{eqnarray*}
where we have used \eqref{Ain+1ell} and the  associativity of the Stieltjes integral \cite[Theorem I.6b]{Widder} in the final step.

Next, as observed in \cite{FoellmerIto}, taking $X=X^k$ in \eqref{quadratic variation}, the convergence in \eqref{quadratic variation} can be interpreted as vague convergence of the point measures 
$$\sum_{{t_i, t_{i+1}\in\bT_N}}( X^k_{t_{i+1}}- X^k_{t_i})^2\delta_{t_i}
$$
toward the continuous and nonnegative Radon measure $d[X^k]_t$. Therefore,
$$\sum_{\stackrel{t_i, t_{i+1}\in\bT_N}{t_{i+1}\leq t}}\varphi(t_i)( X^k_{t_{i+1}}- X^k_{t_i})^2\longrightarrow\int_0^t\varphi(s)\,d[X^k]_s
$$
holds for any continuous function $\varphi$ due to the portmanteau theorem (e.g., \cite[Theorem 14.3]{AliprantisBorder}). Via the polarization identity in \eqref{covariation eq}, we get the analogous result for the covariation $[X^k,X^m]$ replacing $[X^k]$. 
This implies 
\begin{eqnarray*}
\lefteqn{\sum_{\stackrel{t_i, t_{i+1}\in\bT_N}{t_{i+1}\leq t}}\eta^\ell_{t_i}(\bm X_{t_{i+1}}-\bm X_{t_i})\cdot \nabla_{\bm x}^2F^\ell({\bm A}_{t_i},\bm X_{t_i})(\bm X_{t_{i+1}}-\bm X_{t_i})}\\
&=&\sum_{k,m=1}^d\sum_{\stackrel{t_i, t_{i+1}\in\bT_N}{t_{i+1}\leq t}}\eta^\ell_{t_i}F^\ell_{x^kx^m}({\bm A}_{t_i},\bm X_{t_i})( X^k_{t_{i+1}}- X^k_{t_i})( X^m_{t_{i+1}}- X^m_{t_i})\\
&\longrightarrow& \sum_{k,m=1}^d\int_0^t\eta^\ell _sF^\ell_{x^kx^m}({\bm A}_{t_i},\bm X_{t_i})\,d[X^k,X^m]_s\\
&=& \sum_{k,m=1}^d\int_0^t\eta^\ell _sf^\ell_{x^kx^m}({\bm A}^{(\ell)}_{t_i},\bm X_{t_i})\,d[X^k,X^m]_s.
\end{eqnarray*}
Moreover, $\nabla_{\bm x}F^\ell({\bm A}_{t_i},\bm X_{t_i})=\nabla_{\bm x}f^\ell({\bm A}^{(\ell)}_{t_i},\bm X_{t_i})=\bm\xi^{(\ell)}_t$, and so 
\begin{eqnarray*}
\sum_{\ell=1}^\nu\sum_{\stackrel{t_i, t_{i+1}\in\bT_N}{t_{i+1}\leq t}}\eta^\ell_{t_i}\nabla_{\bm x}F^\ell({\bm A}_{t_i},\bm X_{t_i})\cdot(\bm X_{t_{i+1}}-\bm X_{t_i})&=&\sum_{\stackrel{t_i, t_{i+1}\in\bT_N}{t_{i+1}\leq t}}\sum_{\ell=1}^\nu\eta^\ell_{t_i}\bm\xi^{(\ell)}_{t_i}\cdot(\bm X_{t_{i+1}}-\bm X_{t_i})\\
&\longrightarrow&\int_0^t\sum_{\ell=1}^\nu\eta_s^\ell\bm\xi^{(\ell)}_s\,d\bm X_s.
\end{eqnarray*}

Putting everything together, we see that the limit on the right-hand side of \eqref{associativity limit} is given by 
\begin{eqnarray*}
\lefteqn{-\frac12\sum_{k,m=1}^d\int_0^tf^\ell_{x^kx^m}(\bm A^{(\ell)}_s,\bm X_s)\,d[X^k,X^m]_s+\frac12\sum_{k,m=1}^d\int_0^tf^\ell_{x^kx^m}(\bm A^{(\ell)}_s,\bm X_s)\,d[X^k,X^m]_s}\\
&&\qquad\qquad\qquad\qquad\qquad\qquad\qquad\qquad\qquad\qquad+\int_0^t\sum_{\ell=1}^\nu\eta_s^\ell\bm\xi^{(\ell)}_s\,d\bm X_s=\int_0^t\sum_{\ell=1}^\nu\eta_s^\ell\bm\xi^{(\ell)}_s\,d\bm X_s.
\end{eqnarray*}
$$
$$
This concludes the proof. \end{proof}

\section{Proofs of Theorems~\ref{main thm} and~\ref{CPPIProp}}\label{Proofs Section}

\begin{proof}[Proof of Theorem~\ref{main thm}] We note first that $m_t/S_t$ is an admissible integrand for $S$, because $m_t$ is an admissible integrand, and $1/S_t$ can locally for $t\in[0,T]$ be written as $f(S_t)$ for some function $f\in C^1(\bR)$ since $S_t$ is bounded away from zero for $0\le t\le T$. In particular,  formula  \eqref{Ct eq} is well-defined.  
Let us write 
$$C_t=X_tA_t,
$$
where
$$X_t:=\exp\bigg(\int_0^t\frac{m_s}{S_s}\,d S_s-\frac12\int_0^t\frac{m_s^2}{S_s^2}\,d[ S]_s\bigg)
$$
and
$$ A_t:=(1-\alpha)V_0\exp\bigg(\int_0^t(1-m_s)r_s\,ds\bigg).
$$
Using the function $f(a,x):=ax$ in Theorem~\ref{FoellmerThm} yields the integration by parts formula
\begin{equation}\label{C int part eq}
C_t-C_0=X_tA_t-X_0A_0=\int_0^tA_s\,dX_s+\int_0^tX_s\,dA_s.
\end{equation}

We now define 
$$Y_t:= \int_0^t\frac{m_s}{S_s}\,d S_s\qquad\text{and}\qquad  L_t:=\frac12\int_0^t\frac{m_s^2}{S_s^2}\,d[ S]_s.
$$
Then $L$ is continuous and of locally finite variation by standard properties of the Stieltjes integral (see   \cite[Theorem I.5c]{Widder}), and $Y$ has the continuous quadratic variation $[Y]_t=2L_t$ by Proposition~\ref{covariation Prop}. 
Moreover, applying Theorem~\ref{FoellmerThm} to the function $g(a,y)=e^{y-a}$ yields 
\begin{eqnarray*}
X_t-X_0&=&g(L_t,Y_t)-g(L_0,Y_0)\\
&=&\int_{0}^t g_y(L_s,Y_s)\,dY_s+\int_0^tg_a(L_s,Y_s)\,dL_s+\frac12\int_0^tg_{yy}(L_s,Y_s)\,d[Y]_s\\
&=&\int_0^tX_s\,dY_s,
\end{eqnarray*}
where we have applied Theorem~\ref{associativity thm} to the Stieltjes integral $\int_0^tg_\ell(L_s,Y_s)\,dL_s$ (instead of Theorem~\ref{associativity thm}   one can here also apply \cite[Theorem I.6b]{Widder}).
We also have
$$A_t-A_0=\int_0^tA_s(1-m_s)r_s\,ds=\int_0^t\frac{A_s(1-m_s)}{B_s}\,dB_s.
$$
Plugging these results into \eqref{C int part eq} and applying Theorem~\ref{associativity thm}  several times yields that 
$$\frac{X_sA_sm_s}{S_s}=\frac{m_sC_s}{S_s}=\xi_s$$
is an admissible integrand for $S$ and that 
\begin{eqnarray}
C_t-C_0&=& \int_0^tA_sX_s\,dY_s+\int_0^t\frac{X_sA_s(1-m_s)}{B_s}\,dB_s\nonumber\\
&=&\int_0^t\frac{m_sC_s}{S_s}\,dS_s+\int_0^t\frac{C_s(1-m_s)}{B_s}\,dB_s.\label{Ct int eqn}
\end{eqnarray}
It follows that $V_t:=C_t+\alpha V_0 B_t$ satisfies
\begin{eqnarray*}
V_t-V_0=\int_0^t\frac{m_sC_s}{S_s}\,dS_s+\int_0^t\frac{C_s(1-m_s)+\alpha V_0B_s}{B_s}\,dB_s=\int_0^t\xi_s\,dS_s+\int_0^t\eta_s\,dB_s,
\end{eqnarray*}
where $\xi$ and $\eta$ are as in \eqref{xi eta}. Finally, we clearly have 
$$V_t=\xi_tS_t+\eta_tB_t,
$$
which shows that $(\xi,\eta)$ is indeed a self-financing strategy with portfolio value  $V$.

\bigskip

Now we turn toward the proof of the uniqueness of the DPPI strategy. To this end, let $(\xi,\eta)$ be the self-financing strategy constructed above, with portfolio value $V_t=\xi_tS_t+\eta_tB_t$ and cushion $C_t=V_t-\alpha V_0B_t$. 
Suppose moreover that $(\wt\xi,\wt\eta)$ is another self-financing strategy with portfolio value $\wt V_t=\wt\xi_tS_t+\wt\eta_tB_t$ and cushion $\wt C_t=\wt V_t-\alpha V_0B_t$ such that $\wt V_0=V_0$ and $\wt\xi_t=m_t\wt C_t/S_t$.

From the self-financing condition we  necessarily have that 
$$\wt\eta_t=\frac{\wt V_t-\wt\xi_tS_t}{B_t}=\frac{\wt C_t+\alpha V_0B_t-m_t\wt C_t}{B_t}$$
and hence
\begin{eqnarray*}
\wt C_t-\wt C_0&=&\wt V_t-V_0-\alpha V_0(B_t-B_0)\\
&=&\int_0^t\frac{m_s\wt C_s}{S_s}\,dS_s+\int_0^t\frac{\wt C_s+\alpha V_0B_s-m_s\wt C_s}{B_s}\,dB_s-\int_0^t\alpha V_0\,dB_s\\
&=&\int_0^t\frac{m_s\wt C_s}{S_s}\,dS_s+\int_0^t(1-m_s)\wt C_s r_s\,ds
\end{eqnarray*}
In the preceding part of the proof we showed that $C_t$ satisfies the same It\^o integral equation; see \eqref{Ct int eqn}. 
When letting
$$Y^{(1)}_t:=e^{-\int_0^t(1-m_s)r_s\,ds}\wt C_t\qquad\text{and}\qquad Y^{(2)}_t:=e^{-\int_0^t(1-m_s)r_s\,ds}C_t,
$$
  one easily checks via \eqref{C int part eq} that $Y^{(i)}$ satisfies
\begin{equation}\label{Yi eq}
Y^{(i)}_t=Y^{(i)}_0+\int_0^t\frac{m_sY^{(i)}_s}{S_s}\,dS_s,\qquad 0\le t\le T,\ i=1,2.
\end{equation}
Here, the pathwise It\^o integral exists since, e.g.,  ${m_tY^{(1)}_t}/{S_t}=e^{-\int_0^t(1-m_s)r_s\,ds}\wt \xi_t$ is clearly an admissible integrand for $S$.

It follows from equation \eqref{Yi eq}, Remark~\ref{FV remark}, and Proposition~\ref{covariation Prop} that the quadratic variations $[Y^{(i)}]$ and  the covariation $[Y^{(1)},Y^{(2)}]$ exist and are given by
\begin{eqnarray}\label{Yi q var eq}
[Y^{(i)}]_t=\int_0^t\frac{m_t^2(Y^{(i)}_t)^2}{S_t^2}\,d[S]_t \qquad\text{and}\qquad [Y^{(1)},Y^{(2)}]_t=\int_0^t\frac{m_t^2Y^{(1)}_tY^{(2)}_t}{S_t^2}\,d[S]_t.
\end{eqnarray}
In particular, $\bm Y_t=(Y^{(1)}_t,Y^{(2)}_t)$ can be used as integrator in the pathwise It\^o formula. 

Now let $T>0$ be given. Then by \eqref{Ct eq} there exists $\eps>0$ such that $Y^{(2)}_t\ge\eps$ for $0\le t\le T$. Let $f\in C^2(\bR^2)$ be a function such that
$f(y^1,y^2)= y^1/y^2$ for $y^2\ge \eps/2$. Theorem~\ref{FoellmerThm2} then yields that
\begin{equation}\label{Y Ito eq}
\begin{split}
f(\bm Y_t)-f(\bm Y_0)&=\int_0^t\nabla f(\bm Y_s)\,d\bm Y_s+\frac12\int_0^tf_{y^1y^1}(\bm Y_s)\,d[Y^{(1)}]_s\\&\qquad +\frac12\int_0^tf_{y^2y^2}(\bm Y_s)\,d[Y^{(2)}]_s+\int_0^tf_{y^1y^2}(\bm Y_s)\,d[Y^{(1)},Y^{(2)}]_s.
\end{split}\end{equation}
Applying Theorem~\ref{associativity thm} with $\nu=2$, $\bm \eta_t=\nabla f(\bm Y_s)$, $\xi^{(\ell)}_t:={m_sY^{(\ell)}_s}/{S_s}$, $d=1$, and $ X=S$ yields that the It\^o integral above is given by
\begin{eqnarray*}
\int_0^t\nabla f(\bm Y_s)\,d\bm Y_s&=&\int_0^t\bigg(f_{y^1}(\bm Y_s)\frac{m_sY^{(1)}_s}{S_s}+f_{y^2}(\bm Y_s)\frac{m_sY^{(2)}_s}{S_s}\bigg)\,dS_s.
\end{eqnarray*}
Since $f_{y^1}(\bm Y_s)=1/Y^{(2)}_t$ and $f_{y^2}(\bm Y_s)=-Y^{(1)}_t/(Y^{(2)}_t)^2$, we see that the integrand of the right-hand integral vanishes. Hence $\int_0^t\nabla f(\bm Y_s)\,d\bm Y_s=0$ for $0\le t\le T$. Moreover, $f_{y^1y^1}=0$ and so also the second integral on the right-hand side of \eqref{Y Ito eq} vanishes. Finally, one easily shows with \eqref{Yi q var eq} and the associativity of the Stieltjes integral that the remaining two integrals on the right-hand side of \eqref{Y Ito eq} add up to zero. Thus, $Y^{(1)}_t/Y^{(2)}_t=f(\bm Y_t)=f(\bm Y_0)=Y^{(1)}_0/Y^{(2)}_0=1$ and so $\wt C_t=C_t$ for all $t\in[0,T]$. Therefore the uniqueness of the DPPI strategy follows.
\end{proof}

\begin{proof}[Proof of Theorem~\ref{CPPIProp}] Take $T>0$ and let $\eps>0$ 
be such that $S_t\ge\eps$ for $0\le t\le T$. Then we take $f\in C^2(\bR)$ such that $f(x)=\log x$ for $x\ge\eps/2$. When $m$ is constant, an application of the pathwise It\^o formula to $mf(S_t)$ yields that
$$ \int_0^t\frac{m}{S_s}\,d S_s=m\log S_t-m\log S_0+\frac m2\int_0^t\frac1{S_s^2}\,d[S]_s.
$$
Moreover, \cite[Proposition 2.2.10]{Sondermann} yields that 
$$[\log S]_t=\int_0^t\frac1{S_s^2}\,d[S]_s.
$$
Hence, formula \eqref{Ct eq} becomes
\begin{eqnarray*}
C_t&=&(1-\alpha)V_0\exp\bigg(m\log S_t-m\log S_0-\frac{m(m-1)}2\int_0^t\frac1{S_s^2}\,d[S]_s+(1-m)\int_0^tr_s\,ds\bigg)\\
&=&(1-\alpha)V_0\bigg(\frac{S_t}{S_0}\bigg)^me^{-\frac12m(m-1)[\log S]_t}B_t^{1-m}.
\end{eqnarray*}
This concludes the proof.\end{proof}

\bibliography{Bib}{}

\begin{thebibliography}{10}

\bibitem{AliprantisBorder}
C.~D. Aliprantis and K.~C. Border.
\newblock {\em Infinite-dimensional analysis. A hitchhiker's guide}.
\newblock Springer-Verlag, Berlin, second edition, 1999.

\bibitem{Balder}
S.~Balder, M.~Brandl, and A.~Mahayni.
\newblock Effectiveness of {CPPI} strategies under discrete-time trading.
\newblock {\em J. Econom. Dynam. Control}, 33(1):204--220, 2009.

\bibitem{Benderetal1}
C.~Bender, T.~Sottinen, and E.~Valkeila.
\newblock Pricing by hedging and no-arbitrage beyond semimartingales.
\newblock {\em Finance Stoch.}, 12(4):441--468, 2008.

\bibitem{Benderetal}
C.~Bender, T.~Sottinen, and E.~Valkeila.
\newblock Fractional processes as models in stochastic finance.
\newblock In G.~Di~Nunno and B.~{\O}ksendal, editors, {\em Advanced
  mathematical methods for finance}, pages 75--103. Springer, Heidelberg, 2011.

\bibitem{BickWillinger}
A.~Bick and W.~Willinger.
\newblock Dynamic spanning without probabilities.
\newblock {\em Stochastic Process. Appl.}, 50(2):349--374, 1994.

\bibitem{BlackJones}
F.~Black and R.~C. Jones.
\newblock Simplifying portfolio insurance.
\newblock {\em The Journal of Portfolio Management}, 14(1):48--51, 1987.

\bibitem{BlackPerold}
F.~Black and A.~Perold.
\newblock Theory of constant proportion portfolio insurance.
\newblock {\em Journal of Economic Dynamics and Control}, 16(3):403--426, 1992.

\bibitem{BrownHobsonRogers}
H.~Brown, D.~Hobson, and L.~C.~G. Rogers.
\newblock Robust hedging of barrier options.
\newblock {\em Math. Finance}, 11(3):285--314, 2001.

\bibitem{Buehler}
H.~B{\"u}hler.
\newblock Consistent variance curve models.
\newblock {\em Finance Stoch.}, 10(2):178--203, 2006.

\bibitem{Cheridito}
P.~Cheridito.
\newblock Arbitrage in fractional {B}rownian motion models.
\newblock {\em Finance Stoch.}, 7(4):533--553, 2003.

\bibitem{Cont}
R.~Cont.
\newblock Model uncertainty and its impact on the pricing of derivative
  instruments.
\newblock {\em Math. Finance}, 16(3):519--547, 2006.

\bibitem{ContTankovCPPI}
R.~Cont and P.~Tankov.
\newblock Constant proportion portfolio insurance in the presence of jumps in
  asset prices.
\newblock {\em Mathematical Finance}, 19(3):379--401, 2009.

\bibitem{CoxObloj}
A.~M.~G. Cox and J.~Ob{\l}{\'o}j.
\newblock Robust hedging of double touch barrier options.
\newblock {\em SIAM J. Financial Math.}, 2:141--182, 2011.

\bibitem{DavisRavalObloij}
M.~Davis, J.~Ob{\l}{\'o}j, and V.~Raval.
\newblock Arbitrage bounds for prices of weighted variance swaps.
\newblock {\em {\rm To appear in} Mathematical Finance}, 2013.

\bibitem{FoellmerIto}
H.~F{\"o}llmer.
\newblock Calcul d'{I}t\^o sans probabilit\'es.
\newblock In {\em Seminar on {P}robability, {XV} ({U}niv. {S}trasbourg,
  {S}trasbourg, 1979/1980) ({F}rench)}, volume 850 of {\em Lecture Notes in
  Math.}, pages 143--150. Springer, Berlin, 1981.

\bibitem{FoellmerECM}
H.~F{\"o}llmer.
\newblock Probabilistic aspects of financial risk.
\newblock In {\em European {C}ongress of {M}athematics, {V}ol. {I}
  ({B}arcelona, 2000)}, volume 201 of {\em Progr. Math.}, pages 21--36.
  Birkh\"auser, Basel, 2001.

\bibitem{FoellmerSchied}
H.~F{\"o}llmer and A.~Schied.
\newblock {\em Stochastic finance. An introduction in discrete time}.
\newblock Walter de Gruyter \& Co., Berlin, 3rd revised and extended edition,
  2011.

\bibitem{FoellmerSchiedBernoulli}
H.~F{\"o}llmer and A.~Schied.
\newblock Probabilistic aspects of finance.
\newblock {\em Bernoulli}, 19:1306--1326, 2013.
\newblock Special issue in celebration of the 300th anniversary of the
  publication of Jacob Bernoulli\rq s Ars Conjectandi.

\bibitem{GilboaSchmeidler}
I.~Gilboa and D.~Schmeidler.
\newblock Maxmin expected utility with nonunique prior.
\newblock {\em J. Math. Econom.}, 18(2):141--153, 1989.

\bibitem{HansenSargent}
L.~P. Hansen and T.~J. Sargent.
\newblock {\em Robustness}.
\newblock Princeton university press, 2011.

\bibitem{Knight}
F.~Knight.
\newblock {\em Risk, uncertainty, and profit}.
\newblock Houghton Mifflin, Boston, 1921.

\bibitem{Maccheronietal}
F.~Maccheroni, M.~Marinacci, and A.~Rustichini.
\newblock Ambiguity aversion, robustness, and the variational representation of
  preferences.
\newblock {\em Econometrica}, 74(6):1447--1498, 2006.

\bibitem{Paulot}
L.~Paulot and X.~Lacroze.
\newblock One-dimensional pricing of {CPPI}.
\newblock {\em Appl. Math. Finance}, 18(3):207--225, 2011.

\bibitem{Perold}
A.~F. Perold.
\newblock Constant proportion portfolio insurance.
\newblock {\em Harvard Business School}, 1986.

\bibitem{Salopek}
D.~M. Salopek.
\newblock Tolerance to arbitrage.
\newblock {\em Stochastic Process. Appl.}, 76(2):217--230, 1998.

\bibitem{SchiedStadje}
A.~Schied and M.~Stadje.
\newblock Robustness of delta hedging for path-dependent options in local
  volatility models.
\newblock {\em J. Appl. Probab.}, 44(4):865--879, 2007.

\bibitem{Shiryaev}
A.~N. Shiryaev.
\newblock {\em Essentials of stochastic finance. Facts, models, theory},
  volume~3 of {\em Advanced Series on Statistical Science \& Applied
  Probability}.
\newblock World Scientific Publishing Co. Inc., River Edge, NJ, 1999.
\newblock Translated from the Russian manuscript by N. Kruzhilin.

\bibitem{Sondermann}
D.~Sondermann.
\newblock {\em Introduction to stochastic calculus for finance. A new didactic
  approach}, volume 579 of {\em Lecture Notes in Economics and Mathematical
  Systems}.
\newblock Springer-Verlag, Berlin, 2006.

\bibitem{Widder}
D.~V. Widder.
\newblock {\em The {L}aplace {T}ransform}.
\newblock Princeton Mathematical Series, v. 6. Princeton University Press,
  Princeton, N. J., 1941.

\end{thebibliography}
\bibliographystyle{abbrv}

\end{document}